\theoremstyle{plain}
\newtheorem{theorem}{Theorem}
\newtheorem{lemma}[theorem]{Lemma}
\newtheorem{corollary}[theorem]{Corollary}
\newtheorem{proposition}[theorem]{Proposition}
\theoremstyle{definition}
\newtheorem{definition}[theorem]{Definition}
\newtheorem{remark}{Remark}
\newcommand{\sg}{\sigma}
\newcommand{\mA}{{\mathfrak A}}
\newcommand{\ma}{{\mathfrak A}}
\newcommand{\Dom}{{\rm Dom}}
\newcommand{\dom}{{\rm dom}}
\newcommand{\restrict}[2]{{#1}({#2})}
\newcommand{\calC}{{\mathcal C}}
\newcommand{\calS}{{\mathcal S}}
\newcommand{\mX}{{\mathfrak X}}
\newcommand{\Fr}{{\rm Fr}}
\newcommand{\Var}{{\rm Var}}
\newcommand{\ESO}{{\rm ESO}}
\newcommand*\dep{{=\mkern-1.2mu}} 
\newcommand{\df}{\rm{D}}
\newcommand{\FOinclusion}{{\rm FO(\subseteq)}}
\newcommand{\ESOfarity}[1]{{\rm ESO}_f({#1}\mbox{-ary})}
\newcommand{\fo}{{\rm FO}}
\newcommand{\np}{{\rm NP}}
\newcommand{\Ptime}{{\rm PTIME}}
\newcommand{\nl}{{\rm NL}}
\newcommand{\logspace}{{\rm L}}
\newcommand{\PGFP}{{\rm GFP}^+}
\newcommand{\tu}[1]{\overline{#1}}
\newcommand{\indep}[3]{{#1}\ \bot_{#2}\ {#3}}
\newcommand{\col}[1]{\mathbf{#1}}
 \newcommand{\pb}[1]{\textsc{#1}}
\newcommand{\Disjunctiondepth}[1]{\textsf{d}_{\vee}(#1)}
\newcommand{\indlogic}{\rm{FO} (\bot_{\rm c})}
\def\one{w_1}
\def\two{w_2}
\def\three{w_3}
\begin{document}

\title{Tractability Frontier of Data Complexity in Team Semantics}

\author{Arnaud Durand\\
IMJ-PRG, CNRS UMR 7586, Universit\'e de Paris, France

\and

Juha Kontinen\\
Department of Mathematics and Statistics,\\
 University of Helsinki, Finland

\and Nicolas de Rugy-Altherre\\
Universit\'e de Lorraine, France

\and
Jouko V\"{a}\"{a}n\"{a}nen\\
Department of Mathematics and Statistics,\\
 University of Helsinki, Finland\\
and
Institute for Logic, Language and Computation,\\
 University of Amsterdam, The Netherlands}
\date{}

\maketitle

%\newpage

%\tableofcontents

\begin{abstract}We study the data complexity of model-checking for logics with team semantics. We focus on dependence, inclusion, and independence logic formulas under both strict and lax team semantics. Our results delineate a clear tractability/intractability frontiers in data complexity of both quantifier-free and quantified formulas for each of the logics. 
For inclusion logic under the lax semantics, we reduce the model-checking problem to the satisfiability problem of so-called \emph{dual-Horn} Boolean formulas. Via this reduction, we give an alternative proof for the known result that the data complexity of inclusion logic is in  PTIME.
\end{abstract}

\section{Introduction}

In this article we study the data complexity of model-checking of dependence, independence, and inclusion logic formulas. Independence  and inclusion logic  \cite{gradel10,galliani12} are variants of  dependence logic \cite{vaananen07}  that  extends first-order logic by dependence atoms of the form 
%\begin{equation}\label{}
$\dep(x_1,\ldots,x_n)$
%\end{equation} 
expressing that the value of $x_n$ is functionally determined by the values of the variables $x_1,\ldots, x_{n-1}$. 
%The semantics of dependence logic is defined using sets of assignments rather than a single assignment as in first-order logic.  
In independence and inclusion logic dependence atoms are replaced by independence and inclusion atoms 
$\tu{y}\ \bot_{\tu{x}} \ \tu{z}$ and  $\tu{x}\subseteq \tu{y},$
respectively. The meaning of the independence atom is that, with respect to any fixed value of  $ \tu x$, the variables $\tu y$  are independent of the variables $\tu z$, whereas the  inclusion atom expresses that all the values of $\tu{x}$ appear also as values for $\tu{y}$.

Team semantics is a  framework for formalizing and studying  various notions of dependence and independence  pervasive in many areas of science.  Team semantics differs from Tarski's semantics by interpreting formulas using sets of assignments instead of single assignments as in  first-order logic.  
 Reflecting this, dependence logic  has higher expressive power than classical logics used for these purposes previously.  Dependence, inclusion, and independence  atoms are intimately connected to the  corresponding functional, inclusion, and multivalued  dependencies studied in  database theory, see, e.g., \cite{DBLP:conf/foiks/HannulaK14}.  Interestingly, independence atoms correspond naturally to a qualitative analogue of the notion of conditional independence in statistics  \cite{DBLP:journals/networks/GeigerVP90}. On the other hand,  team semantics can be naturally generalized to a probabilistic variant in which probabilistic independence can be taken  as an atomic formula (see \cite{HannulaHKKV19,HKMV18} for further details). %Furthermore, a variant of dependence logic is in the heart of  {\em Inquisitive Semantics} which is a novel approach in linguistics that analyzes information exchange through communication, see \cite{DBLP:journals/jphil/CiardelliR11}.
 
 Dependence logic and its variants can be  used to %logically 
 formalize and study dependence and independence notions in various areas. For example, in the foundations of quantum mechanics, there  are a range of notions of independence playing a central role in celebrated No-Go results such as Bell's theorem. 
 %Abramsky and V\"a\"an\"anen have recently showed that, under a  relational view on these results,  some of these No-Go results can be logically formalized and  syntactically derived using the axioms of independence and dependence atoms. 
 %For another application of team semantics in quantum information theory, see \cite{2014arXiv1409.5537H}. 
 Similarly, in the foundations of social choice theory, there are results such as Arrow's Theorem which  can also be formalized in the team semantics setting   \cite{2014arXiv1409.5537H,PacuitY16}.
 %%%%

For the applications it is important to understand the complexity theoretic aspects  of dependence logic and its variants. %Full dependence logic is  equivalent in expressive power with existential second-order logic ($\ESO$) \cite{vaananen07}. 
 During the past few years,  these aspects have been addressed in several studies. We will next briefly discuss some previous work. 
%\begin{itemize}
 The data complexity of inclusion logic  is sensitive to the choice between the two main variants of team semantics: under the so-called  lax semantics it is equivalent to positive greatest fixed point logic ($\PGFP$) and captures $\Ptime$ over finite (ordered) structures \cite{gallhella13}. Recently a  fragment of inclusion logic that captures $\textrm{NL}$ has been identified in \cite{HannulaH19}. The same article also exhibits surprisingly simple formulas of inclusion logic whose data complexity is $\nl$ and $\mathrm{P}$-complete (see equations  \eqref{HH1} and \eqref{HH2}).
 On the other hand, under the strict semantics, inclusion logic is equivalent to $\ESO$ (existential second order logic) and hence captures $\np$ \cite{galhankon13}.  %The  question whether there is a natural  fragment of dependence logic capturing $\Ptime$  was recently considered in \cite{DBLP:journals/corr/abs-1210-3321} %For $\ESO$  it is known that a so-called Horn fragment $\ESOHORN$ of  $\ESO$ captures 
% $\Ptime$ over  successor structures \cite{DBLP:journals/tcs/Gradel92}. 
%and 
% a fragment  $\SDHORN$  satisfying 
 %$\SDHORN = \ESOHORN=\Ptime$ over finite successor structures was identified.
 
 In \cite{durand11} the fragment of dependence logic allowing only sentences in which dependence atoms of arity at most $k$ may appear (atoms  $\dep(x_1,\ldots, x_n)$ satisfying $n\le k+1$) was shown to correspond to the $k$-ary fragment  $\ESOfarity{k}$ of ESO in which second-order quantification is restricted to at most $k$-ary functions and relations.  
 %Also,  the fragment $\dforall{k}$ in which at most $k$ variables are allowed to  be universally quantified was related 
 %$$\dforall{k} \le \ESOfvar{k}\le  \dforall{2k},$$
%to  a fragment   $\ESOfvar{k}$ of   ESO  consisting of  Skolem normal form sentences with  at most  $k$ universal first-order quantifiers.  % $\ESOfvar{k} =\NTIME_{\RAM}(n^k)$  \cite{grandjean04}.
Several  similar results have been obtained  also for independence and inclusion logic, e.g.,  in \cite{galhankon13,DBLP:journals/corr/HannulaK14,Ronnholm18,Ronnholm19}.

 The combined complexity of the model-checking problem of dependence logic, and many of its variants, was shown to be NEXPTIME-complete \cite{gradel12}. %Furthermore, for any variant of dependence logic whose atoms are PTIME-computable, the corresponding model checking problem is in NEXPTIME   
 On the other hand, the satisfiability problem for the two variable fragment of dependence logic (and many of its variants) was shown to be NEXPTIME-complete in \cite{KontinenKLV11, KontinenKV14}. Furthermore,  during the past few years, the complexity aspects of propositional and modal logics in team semantics have been also systematically studied (see \cite{Luck19,HannulaKVV18} and the references therein).

The starting point for the present work are the following results of \cite{kontinenj13} showing that the non-classical interpretation of disjunction in team semantics makes the model-checking of certain quantifier-free
formulas very complicated.  Define $\phi_1$ and $\phi_2$ as follows:
\begin{enumerate}
\item $\phi_1$ is the formula $\dep(x,y)\vee\dep(u,v)$, and
\item\label{phi2} $\phi_2$ is  the formula $ \dep(x,y)\vee\dep(u,v)\vee \dep(u,v)$.
\end{enumerate}
Surprisingly, the data complexity of the model-checking problem of $\phi_1$ and $\phi_2$   are already  NL-complete and NP-complete, respectively. In  \cite{kontinenj13} it was also shown that  model-checking for $\phi \vee \psi$ where $\phi$ and $\psi$ are $2$-\emph{coherent} quantifier-free formulas of dependence logic is always in $\nl$. A formula $\phi$ is  called $k$-coherent if, for all $\mA$ and $X$,  $\mA\models_X \phi$, if and only if,   
$\mA\models_Y \phi$ for all $Y\subseteq X$ such that $|Y|=k$. Note that the left-to-right implication is always true due to the downwards closure property of dependence logic formulas. The  downwards closure property also implies that, for dependence logic formulas,  the strict and the lax semantics are equivalent. For independence and inclusion logic formulas this is not the case.
%\arnaud{Should we mention that in all cases, strict is at least as hard as lax?}

In this article our goal is to shed light on  the tractability frontier of data complexity of dependence, independence, and inclusion logic  formulas under both strict and lax team semantics.  In order to state our results, we define  a new syntactic measure called the disjunction-width $\Disjunctiondepth{\phi}$  of a formula $\phi$. Our results show that,  for quantifier-free formulas $\phi$ of dependence logic, the data complexity of model-checking is in NL if $\Disjunctiondepth{\phi}\le 2$. Surprisingly, for independence logic the case of quantifier-free formulas turns out to be  more fine-grained. In particular, we exhibit  a quantifier-free formula  with $\Disjunctiondepth{\phi}\le 2$ whose data-complexity is NP-complete and also identify a more restricted fragment with data complexity in  $\nl$.
% This result is in a sense a explicit syntactically defined generalisation  of the result of  \cite{kontinenj13}  on $2$-coherent quantifier-free $\df$-formulas mentioned above. 
%We also show that for simple formulas $\phi$ with $\Disjunctiondepth{\phi}=3$  the problem becomes NP-complete. Note that  since  $\dep(x,y)$ is logically equivalent to $y \bot_x y$,  formula $\phi_2$ in \eqref{phi2} gives rise to an independence logic formula with these properties. 
%The task of  finding a formula with NP-complete model-checking problem  and $\Disjunctiondepth{\phi}=3$ that does not use \emph{conditional} independence atoms but only so-called \emph{pure} atoms  $y \bot x$ turns out to be more laborious (see Theorem \ref{pure3-disjunction}).
For quantified formulas,  the complexity 
%of model-checking 
is shown to be   NP-complete already with simple formulas constructed in terms of existential quantification and conjunction   in the empty non-logical vocabulary. 

For inclusion logic, we  show that model-checking can be reduced to the satisfiability problem of dual-Horn propositional formulas. While interesting in its own right, this also provides an alternative proof for the fact (see \cite{gallhella13}) that the  data complexity of  inclusion logic is in PTIME, and is also analogous  to the classical result of Gr\"adel on the Horn fragment of second-order logic \cite{DBLP:journals/tcs/Gradel92}.  We  also show that, under the strict semantics, the tractability frontier of model-checking of  (both quantifier-free and quantified) inclusion logic formulas becomes similar to that of  dependence and independence logic.

\section{Preliminaries}
In this section we briefly discuss the basic definitions and results needed in this article.

 \begin{definition}
Let $\mA$ be a structure with domain $A$, and $V=\{x_1,\ldots,x_k\}$ be a finite (possibly empty) set  of
variables.
\begin{itemize}
\item A \emph{team} $X$ of $\mA$ with domain $\Dom(X) = V$  is a finite set of assignments $s\colon V\rightarrow A$.
\item For a tuple $\tu x=( x_1,\ldots,x_n)$, where  $x_i\in V$,  $X(\tu x):=\{s(\tu x) : s \in X\}$ is the $n$-ary relation of $A$, where $s(\tu x):=( s(x_1),\ldots,s(x_n))$.
\item For $W\subseteq V$, $X \upharpoonright W$ denotes  the team obtained by restricting all assignments of $X$ to  $W$.
\item  The set of free variables of a formula $\phi$ is  defined as in first-order logic, taking into account that free variables may arise also from dependence, independence and inclusion atoms, and is denoted by $\Fr(\phi)$.
\end{itemize}
\end{definition}
%We are now ready to define team semantics. As now customary, we will restrict attention to formulas in negation normal form, and use the Lax semantics introduced in \cite{galliani12} that differs slightly from the semantics defined in \cite{vaananen07}.  
We will consider two variants of the semantics called the strict and
the original semantics given in \cite{vaananen07} is a combination of these variants (with the lax disjunction and the strict existential quantifier). For dependence logic formulas, the two variants of the semantics are easily seen to be equivalent, but for independence and inclusion logic this is not the case. We first define the lax  team semantics for first-order formulas in negation normal form. Below  $\mA \models_s \alpha$ refers to the satisfaction in first-order logic, and $s(m/x)$ is the assignment such that $s(m/x)(x)=m$, and   $s(m/x)(y)=s(y)$ for $y\neq x$. The power set of a set $A$ is denoted by $\mathcal{P}(A)$.

\begin{definition}
Let $\mA$ be a structure, $X$ be a team of $A$, and $\phi$ be a first-order formula such that $\Fr(\phi)\subseteq \Dom(X)$. %Then $\mA \models_X \phi$, if 
\begin{description}
\item[lit:] For a first-order literal $\alpha$, $\mA \models_X \alpha$ if and only if, for all $s \in X$, $\mA \models_s \alpha$.
\item[$\vee$:]  $\mA \models_X \psi \vee \theta$ if and only if,  there are $Y$ and $Z$ such that $Y \cup Z=X$,  $\mA \models_Y \psi$ and $\mA \models_Z \theta$.
\item[$\wedge$:] $\mA \models_X \psi \wedge \theta$ if and only if, $\mA \models_X \psi$ and $\mA \models_X \theta$.
\item[$\exists$:]  $\mA \models_X \exists x \psi$ if and only if, there exists a function $F : X \rightarrow \mathcal{P}(A)\setminus \{\emptyset\}$ such that $\mA \models_{X(F/x)} \psi$, where $X(F/x) =  \{s(m/x) : s \in X, m \in F(s)\}$.
\item[$\forall$:] $\mA \models_X \forall x \psi$ if and only if, $\mA \models_{X(A/x)} \psi$, where $X(A/x) = \{s(m/x) : s \in X, m \in A\}$.
\end{description}
A sentence $\phi$ is  \emph{true} in $\mA$ (abbreviated $\mA \models \phi$) if $\mA \models_{\{\emptyset\}} \phi$. Sentences $\phi$ and $\phi'$ are  \emph{equivalent}, $\phi \equiv \phi'$, if for all models $\mA$, $\mA \models \phi \Leftrightarrow \mA \models \phi'$.
\end{definition}
In the \emph{Strict Semantics}, the semantic rule for disjunction is modified by adding the requirement $Y\cap Z = \emptyset$, and  the clause for the existential quantifier  is replaced by 
\begin{description} 
\item $\mA \models_X \exists x \psi$ if and only if, there exists a function $H : X \rightarrow A$ such that $\mA \models_{X(H/x)} \psi$, where $X(H/x) = \{s(H(s)/x) : s \in X\}$. \end{description}
The meaning of first-order formulas is invariant under the choice between the strict and the lax semantics. 
First-order formulas satisfy what is known as the \emph{flatness} property:
%\begin{theorem}[Flatness]\label{flatness}Let $\mA$ be a structure and $X$ a team of $\mA$. Then for a  first- order formula $\phi$ the following are equivalent: 
%\begin{enumerate}
 $\mA \models_X \phi$, if and only if,  $\mA \models_s \phi$  for all $s \in X$.
Next we will give the semantic clauses for the new dependency atoms:
\begin{definition}
\begin{itemize}
\item Let $\tu x$ be a tuple of variables and let $y$ be another variable. Then $\dep(\tu x, y)$ is a \emph{dependence atom}, with the semantic rule
\begin{description}
\item $\mA \models_X \dep(\tu x, y)$ if and only if for all $s, s' \in X$, if $s(\tu x)=s'(\tu x)$, then $s(y)=s'(y)$;
\end{description}
\item Let $\tu x$, $\tu y$, and $\tu z$ be tuples of variables (not necessarily of the same length). Then $\indep{\tu x}{\tu y}{\tu z}$ is a \emph{conditional independence atom}, with the semantic rule
\begin{description}
\item $\mA \models_X {\tu x}\ \bot_{\tu y}\ {\tu z}$ if and only if for all $s, s' \in X$ such that $s(\tu y)=s'(\tu y)$, there exists an assignment $s'' \in X$ such that $s''(\tu x\tu y \tu z)=s(\tu x\tu y)s'(\tu z)$.
\end{description}
Furthermore, when $\tu z$ is empty, we write ${\tu x}\bot{\tu y}$ as a shorthand for $\indep{\tu x} {\tu z}  {\tu y}$, and  call it a \emph{pure independence atom};
\item Let $\tu x$ and $\tu y$ be two tuples of variables of the same length. Then $\tu x \subseteq \tu y$ %and $\tu x \mid \tu y$ are 
is an \emph{inclusion atom}, with the semantic rule %and \emph{exclusion} atoms, respectively, with the semantic rules
\begin{description}
\item $\mA \models_X \tu x \subseteq \tu y$ if and only if for all $s\in X$ there exists a $s' \in X$ such that $s'(\tu y)=s(\tu x)$. 
%\item[TS-exc:] $\M \models_X \tu x \mid \tu y$ if and only if $X(\tu x) \cap X(\tu y) = \emptyset$.
\end{description}
\end{itemize}
\end{definition}
%For a  collection $\mathcal C \subseteq \{=\!\!(\ldots), \bot_{\rm c}, \subseteq, \mid\}$ of atoms, we write $\FO(\mathcal C)$ (omitting the set parenthesis of $\mathcal{C}$) for the logic obtained by adding them to the syntax of first-order logic.
The formulas of dependence logic, $\df$, are obtained by extending the syntax of $\fo$ by dependence atoms. The semantics of $\df$-formulas is obtained by extending Definition 2 by the semantic rule defined above for dependence atoms.  Independence logic, $\indlogic$, and inclusion logic, $\FOinclusion$, are defined analogously using independence and inclusion atoms, respectively.
%The fragment of independence logic containing only pure independence atoms in denoted  $\indNRlogic$.  The aforementioned atoms are particular instance of a general notion of \emph{generalised dependence atom} \cite{DBLP:journals/corr/Kuusisto13}. The semantics of a generalised dependence atom $A_Q$ is determined (essentially) by a class  $Q$ of structures and teams over which the atomic formula $A_Q(\tu x_1,\ldots, \tu x_n)$  is satisfied (see \cite{DBLP:journals/corr/Kuusisto13} for details).  

It is easy to see that the flatness property is lost immediately when $\fo$ is extended by any of the atoms defined above. On the other hand,  it is straightforward to check that all $\df$-formulas satisfy the following  strong \emph{downwards closure} property:  if $\mA \models_X \phi$ and   $Y\subseteq X$, then $\mA \models_Y \phi$.
%\begin{proposition}[Downwards closure]
%\label{thm:dc}
%Let  $\phi$ be a $\df$-formula. Then for all structures  $\mA$  and teams  $X$, if $\mA \models_X \phi$ and    $Y\subseteq X$, then $\mA \models_Y \phi$.
%\end{proposition}
Another basic property of logics in team semantics is called \emph{locality}:
% $\mA \models_X \phi$, if and only if,    $\mA \models_{X \upharpoonright \Fr(\phi)} \phi$.
\begin{proposition}[Locality]
Let  $\phi$ be a formula of any of the logics  $\df$,  $\indlogic$ or $\FOinclusion$. Then, under the lax semantics, for all structures  $\mA$  and teams  $X$:
 $$\mA \models_X \phi \mbox{ iff }    \mA \models_{X \upharpoonright \Fr(\phi)} \phi.$$
\end{proposition}
Under the strict semantics locality  holds only for dependence logic formulas (see \cite{galliani12} for details).

In this article we study the  data complexity  of model-checking  of dependence, independence, and inclusion logic formulas. In other words, for a fixed formula $\phi$ of one of the aforementioned logics, we study the complexity of the following model-checking  problem:  given a finite structure  $\mA$ and a team $X$,  decide whether $\mA\models_X\phi$. 
Note that when we are working with the lax semantics, we may assume without loss of generality that the  domain of $X$ is exactly $\Fr(\phi)$. \emph{If not explicitly mentioned otherwise, all  results are valid under both strict and lax semantics}.
We assume that the reader is familiar with the basics of computational 
complexity theory such as NP-completeness and the log-space bounded classes $\mathrm{L}$ and $\mathrm{NL}$.

\subsection{Complexity classes and satisfiability problems}

In the paper, we will make use of some well-known computational problems whose complexity is recalled here (unless specified all proofs can be found in~\cite{GareyJ1979}). We suppose the reader familiar with basic complexity classes such as $\logspace$ (logarithmic space), $\nl$ (non deterministic logarithmic space), $\Ptime$ (polynomial time), $\np$ (non deterministic polynomial time).

Let $k\in \mathbb{N}$, a propositional formula is in $k$-\pb{cnf} if it is in conjunctive normal form with clauses of length at most $k$. It is positive, if it equivalent to a formula without negation. 

It is well known that, $k$-$\pb{sat}$ the satisfiability problem for $k$-\pb{cnf} formulas is $\np$-complete for $k\geq 3$ and $\nl$-complete for $k=2$. \pb{horn-sat} (resp. \pb{dual-horn-sat}) is the satisfiability problem of \pb{cnf} formulas with at most one positive (resp. negative) literal per clause. This problem is known to be $\Ptime$-problem.

Given a positive $3$-\pb{cnf} formula, deciding whether $\phi$ is satisfiable such that exactly one variable is set to true in each clause is also known to be $\np$-complete. This problem is called \pb{1-in-3-sat}.

\section{Dependence and independence logics}%
%In the rest of this article, due to lack of space,  the full proofs of selected results can only be found in the Appendix.
%Due to lack of space,  the full proofs of some results can only be found in the Appendix.
In this section we consider the complexity of model-checking for quantifier-free and quantified  formulas of dependence and independence logic. 

\subsection{The case of quantifier-free formulas}
In this section we consider the complexity of model-checking for quantifier-free formulas of dependence and independence logic. For dependence logic the problem has  already been essentially settled in \cite{kontinenj13}.  The following theorems delineate a clear barrier between tractability and intractability for quantifier-free dependence logic formulas.
%known about the data complexity of the model cheking problem for formulas in $\df$. 
\begin{theorem}[\cite{kontinenj13}]\label{JarmoKontinenNL} The model checking problem for the formula 
$$\dep(x,y)\vee \dep(z,v)$$ 
is $\nl$-complete. More generally, the model-checking for $\phi \vee \psi$ where $\phi$ and $\psi$ are $2$-coherent quantifier-free formulas of $\df$ is always in $\nl$.
\end{theorem}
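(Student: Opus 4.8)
The plan is to treat the two assertions separately, in reverse order: first the general $\nl$-membership (it subsumes membership of $\dep(x,y)\vee\dep(z,v)$ in $\nl$), then the matching lower bound for that specific formula.

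\textbf{Upper bound.} The plan is to reduce model-checking of $\varphi\vee\psi$, with $\varphi,\psi$ $2$-coherent quantifier-free $\df$-formulas, to \textsc{2-Sat}. Two easy normalizations come first. Since every $\df$-formula is downwards closed, $\mA\models_X\varphi\vee\psi$ iff there is a genuine \emph{partition} $X=Y\sqcup Z$ with $\mA\models_Y\varphi$ and $\mA\models_Z\psi$ (from a witnessing cover $Y'\cup Z'=X$ take $Y=Y'$ and $Z=X\setminus Y'$). And $2$-coherence forces $\mA\models_{\{s\}}\varphi$ and $\mA\models_\emptyset\varphi$ for all $s$, so $\mA\models_Y\varphi$ iff $\mA\models_W\varphi$ for \emph{every} $W\subseteq Y$ with $|W|\le 2$, and likewise for $\psi$. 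Now, for each $s\in X$, introduce a Boolean variable $p_s$ read as ``$s\in Y$'', and let $\theta_{\mA,X}$ be the $2$-CNF containing $(\neg p_s\vee\neg p_{s'})$ for every pair $\{s,s'\}\subseteq X$ with $\mA\not\models_{\{s,s'\}}\varphi$, and $(p_s\vee p_{s'})$ for every pair with $\mA\not\models_{\{s,s'\}}\psi$. A satisfying assignment of $\theta_{\mA,X}$ is exactly a legal choice of $Y$ and $Z$, so $\mA\models_X\varphi\vee\psi$ iff $\theta_{\mA,X}$ is satisfiable. The one routine point is that $\theta_{\mA,X}$ is logspace-computable from $(\mA,X)$: it has $O(|X|^2)$ clauses, and since $\varphi$ is \emph{fixed} and quantifier-free, deciding $\mA\models_{\{s,s'\}}\varphi$ on a team of size $\le 2$ is a logspace task (a quantifier-free formula is a $\{\wedge,\vee\}$-combination of first-order literals and dependence atoms; on a bounded team there are boundedly many splits to try at each $\vee$, and literals and atoms reduce to a constant number of lookups in $\mA$). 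As \textsc{2-Sat}$\in\nl$ and $\nl$ is closed under logspace reductions, model-checking of $\varphi\vee\psi$ lies in $\nl$; taking $\varphi=\dep(x,y)$ and $\psi=\dep(z,v)$, both $2$-coherent, gives the upper bound for the distinguished formula.

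\textbf{Lower bound.} The plan is a logspace reduction from an $\nl$-complete problem --- \textsc{2-Sat}, or directed $s$--$t$ reachability --- essentially inverting the construction above: from the input we build $(\mA,X)$ over the empty vocabulary whose admissible partitions $X=Y\sqcup Z$ (with $\dep(x,y)$ holding on $Y$ and $\dep(z,v)$ on $Z$) are precisely the satisfying assignments. The encoding primitives are read off the two atoms: a pair of assignments agreeing on $x$ and disagreeing on $y$ is forbidden to lie together in $Y$ (a NAND on the part-bits $p_s,p_{s'}$); agreeing on $z$ and disagreeing on $v$ forbids lying together in $Z$ (an OR, $p_s\vee p_{s'}$); doing both forces them into different parts ($p_s\neq p_{s'}$). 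A second assignment forced to the opposite part of $s$ serves as a ``negated copy'' of $p_s$, which upgrades NAND and OR into arbitrary $2$-clauses, and the small gadget $\{\,p\vee a,\ p\vee b,\ \neg a\vee\neg b\,\}$ pins $p$ to the $Y$-side, so unit clauses are available as well. Together with the membership proved above, this yields $\nl$-completeness of $\dep(x,y)\vee\dep(z,v)$.

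\textbf{Main obstacle.} I expect the real difficulty to be bookkeeping in this last reduction rather than any single conceptual step. Whether a pair of assignments is constrained is governed entirely by their sharing a value on the coordinate $x$ (for the $\dep(x,y)$ mechanism) or on $z$ (for $\dep(z,v)$); hence as soon as a variable of the input occurs in several clauses one is forced to introduce copies of it and tie those copies together, while still allocating fresh domain elements per gadget so that every pair \emph{outside} the intended clauses stays genuinely unconstrained. Verifying this non-interference --- a finite case analysis over the possible types of a pair --- is where the work concentrates; the upper bound, by contrast, has no real obstacle once one observes that the passage to a partition is free from downwards closure of $\df$.
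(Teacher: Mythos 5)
First, a point of order: the paper does not prove this theorem --- it is quoted verbatim from \cite{kontinenj13} and used as a black box --- so there is no in-paper proof to compare against. Judged on its own terms, your \textbf{upper bound} is correct and complete: the passage from covers to partitions via downwards closure, the reformulation of $2$-coherence as a condition on all pairs (with singletons and the empty team satisfied automatically), the $2$-CNF encoding with clauses $\neg p_s\vee\neg p_{s'}$ and $p_s\vee p_{s'}$, and the logspace computability of the pair-tests for a fixed quantifier-free formula are all sound. This is in fact the same mechanism the paper itself deploys later (Theorem~9) to put $\varphi_1\vee\varphi_2$ with $\varphi_i\in\bcindepfo$ into $\nl$, so your argument is well aligned with the paper's methods.

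The \textbf{lower bound}, however, is only a plan, and the gap you yourself flag is more than bookkeeping. The obstruction is structural: for a team $X$, the graph of pairs constrained by $\dep(x,y)$ (same $x$-value, different $y$-value) is a disjoint union of complete multipartite graphs --- one per $x$-class, with parts given by $y$-values --- and likewise for $\dep(z,v)$. So you cannot realize an arbitrary pattern of NAND-constraints (or OR-constraints) on the part-bits; each assignment carries a single $x$-value and a single $z$-value, and every NAND (resp.\ OR) it participates in must live inside that one class. Your "negated copy'' and "tie the copies together'' steps each consume both coordinates of the relevant assignment (an XOR needs agreement on $x$ \emph{and} on $z$), so an occurrence-copy that is XOR-linked to its negation and also OR-linked into a clause gadget forces several gadgets into a common $z$-class, creating exactly the unintended constraints you are trying to avoid. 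This can be repaired (it is done in \cite{kontinenj13}), but it requires a concrete allocation of values and a verification that every cross-gadget pair falls into distinct $x$- and $z$-classes; without that, the hardness half of the statement is not yet proved. Note also that for the theorem as used in this paper (Proposition~\ref{disjunction depth 2} only needs membership in $\nl$), the part you have fully established is the part that matters.
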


When  two disjunctions can be used, the model checking problem becomes intractable as shown by the following results.

\begin{theorem}[\cite{kontinenj13}]\label{JarmoKontinenNP}
The model checking problem for the formula 
$$\dep(x,y)\vee \dep(z,v) \vee \dep(z,v)$$
 is $\np$-complete.
\end{theorem}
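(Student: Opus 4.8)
For the upper bound, the plan is simply to note that, by the downwards closure property of $\df$, whenever $\mA\models_X\dep(x,y)\vee\dep(z,v)\vee\dep(z,v)$ there is a witnessing decomposition that is an actual \emph{partition} $X=Y_1\cup Y_2\cup Y_3$ (delete overlaps). So one guesses a map $X\to\{1,2,3\}$, reads off $Y_1,Y_2,Y_3$, and verifies $\mA\models_{Y_1}\dep(x,y)$, $\mA\models_{Y_2}\dep(z,v)$ and $\mA\models_{Y_3}\dep(z,v)$; each verification is a quadratic-time test on the corresponding team, so the problem is in $\np$.

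The substance is the $\np$-hardness, and the plan is to reduce from \pb{3-Sat}. First I would restate the truth condition for $\phi_2:=\dep(x,y)\vee\dep(z,v)\vee\dep(z,v)$ graph-theoretically: for a team $Y$ over $\{x,y,z,v\}$ let $G_{xy}(Y)$ be the graph on $Y$ with an edge $\{s,s'\}$ iff $s(x)=s'(x)$ and $s(y)\neq s'(y)$, and define $G_{zv}(Y)$ analogously for $z,v$. Then $\mA\models_Y\dep(x,y)$ iff $G_{xy}(Y)$ is edgeless, while $\mA\models_Y\dep(z,v)\vee\dep(z,v)$ iff $G_{zv}(Y)$ is $2$-colourable, i.e.\ bipartite (again using downwards closure to make the two parts disjoint). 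Hence $\mA\models_X\phi_2$ iff there is $Y_1\subseteq X$ with $G_{xy}(Y_1)$ edgeless such that $G_{zv}(X\setminus Y_1)$ is bipartite. Next, from a $3$-CNF instance with clauses $c=\ell_{c,1}\vee\ell_{c,2}\vee\ell_{c,3}$ over propositional variables, I would take the team $X=\{\,s_{c,j}: c\text{ a clause and }1\le j\le3\,\}$ over the empty vocabulary and set, writing $p_a$ for the variable underlying $\ell_{c,j}$: $s_{c,j}(z)=c$, $s_{c,j}(v)=j$, $s_{c,j}(x)=a$, and $s_{c,j}(y)=1$ if $\ell_{c,j}=p_a$ while $s_{c,j}(y)=0$ if $\ell_{c,j}=\neg p_a$. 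Thus the three assignments of any clause form a triangle in $G_{zv}$, while $G_{xy}$ puts an edge between $s_{c,j}$ and $s_{c',j'}$ exactly when $\ell_{c,j}$ and $\ell_{c',j'}$ are opposite literals of a single variable.

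It then remains to check the two directions. Given a satisfying assignment $\tau$, I would put into $Y_1$ all $s_{c,j}$ with $\tau\models\ell_{c,j}$: two of these sharing an $x$-value are literals of the same variable and hence of the same polarity under $\tau$, so they carry equal $y$-values and $G_{xy}(Y_1)$ is edgeless; and each clause leaves at most two of its assignments outside $Y_1$, so $G_{zv}(X\setminus Y_1)$ is a disjoint union of single edges and isolated points, hence bipartite, and any $2$-colouring furnishes $Y_2,Y_3$; so $\mA\models_X\phi_2$. Conversely, from a partition $X=Y_1\cup Y_2\cup Y_3$ witnessing $\mA\models_X\phi_2$ (disjointified using downwards closure), the clause-triangle of each $c$ cannot be covered by the two $G_{zv}$-independent sets $Y_2,Y_3$, so some $s_{c,j}$ lies in $Y_1$; because $\mA\models_{Y_1}\dep(x,y)$, all members of $Y_1$ on a given $x$-value agree on $y$, so setting each $p_a$ to the polarity recorded by $Y_1$ on $x$-value $a$ (arbitrarily if $a$ is untouched) is well defined and makes $\ell_{c,j}$, hence clause $c$, true. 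The reduction is obviously polynomial time.

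The step I expect to be the real obstacle is locating the gadget itself, because one must exploit rather than fight the asymmetry between the single $\dep(x,y)$ disjunct and the two $\dep(z,v)$ disjuncts. The $\dep(x,y)$ disjunct must be coerced into encoding one globally consistent truth assignment --- hence the choice that $x$ carries the variable and $y$ the polarity, turning ``one $y$ per $x$'' into ``a well-defined truth value per variable''. The two identical $\dep(z,v)$ disjuncts must act as ``$2$-colourability inside each clause block'', and for this a $K_{1,1,1}$ per clause is exactly tight: a triangle is not bipartite but becomes so the moment one vertex is removed, which is precisely what makes ``at least one literal of $c$ is true'' equivalent to ``$Y_1$ meets the triangle of $c$''. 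With these design choices fixed, everything else is short bookkeeping.
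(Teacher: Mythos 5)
The paper states this theorem only as a citation to \cite{kontinenj13} and does not reproduce a proof, so there is no internal argument to compare against; your proof is correct and self-contained, with the $\np$ upper bound by guessing the (disjointified) split and the hardness by a sound 3-SAT reduction in which $\dep(x,y)$ extracts a consistent set of true literal occurrences and the two $\dep(z,v)$ disjuncts force each clause triangle to lose a vertex. This matches the spirit of the original coherence-based argument in the cited work, and all the delicate points (disjointification via downwards closure, bipartiteness of the leftover clause gadgets, well-definedness of the extracted truth assignment) check out.
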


%We will show below that the formula above is in some sense optimal. We start with the following result which essentially shows that two independence atoms  are not enough to express $\np$-complete problems. 
%\arnaud{The result below will supersede Proposition \ref{2 disjunction Indep} and \ref{2 disjunction Indep bis}}

In order to give a syntactic generalization of  Theorem \ref{JarmoKontinenNL}, we define next the disjunction-width of a formula. % The type of a formula $\phi$, denoted $\type{\phi}$, is 1 if $\phi$ contain at least one "dependence" atom i.e. is not purely first-order . 

\begin{definition} Let $\sigma$ be a relational signature. The disjunction-width of a $\sigma$-formula $\phi$, denoted $\Disjunctiondepth{\phi}$, is defined as follows:
\[
\Disjunctiondepth{\phi}=
\left\{\begin{array}{l}
1  \mbox{ if } \phi  \mbox{ is } \indep{\tu y}{\tu x }{\tu z} \mbox{ or }  \dep(\tu x, y) \mbox{ or } \tu x\subseteq \tu y\\
0   \mbox{ if } \phi \mbox{ is } R(\tu x)  \mbox{ or }   \neg R(\tu x),    \mbox{ for } R\in \sg\cup\{=\}\\
\max (\Disjunctiondepth{\phi_1}, \Disjunctiondepth{\phi_2}) \mbox{ if } \phi \mbox{ is } \phi_1\wedge\phi_2 \\
\Disjunctiondepth{\phi_1} + \Disjunctiondepth{\phi_2} \mbox{ if } \phi \mbox{ is } \phi_1\vee\phi_2 \\
\Disjunctiondepth{\phi_1}  \mbox{ if } \phi \mbox{ is } \exists x \phi_1 \mbox{ or } \forall x \phi_1.
\end{array}\right.
\]
\end{definition}

The next theorem  is a syntactically defined analogue of Theorem  \ref{JarmoKontinenNL}.
\begin{proposition}\label{disjunction depth 2}
The data complexity of  model-checking of quantifier-free $\df$-formulas  $\phi$ with $\Disjunctiondepth{\phi} \leq 2$ is in  $\nl$.
\end{proposition}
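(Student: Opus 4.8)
The plan is to argue by induction on the structure of $\phi$, reducing in each case either to Theorem~\ref{JarmoKontinenNL} or to a structurally smaller instance of the same model-checking problem. Two ingredients carry the argument: a lemma saying that every quantifier-free $\df$-formula $\psi$ with $\Disjunctiondepth{\psi}\le 1$ is $2$-coherent, and the observation that a first-order disjunct can always be absorbed by passing to a subteam.

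First I would prove the lemma by induction on $\psi$. First-order literals and equality atoms are flat, hence $2$-coherent, since a flat formula is already determined by its one-element subteams; a dependence atom $\dep(\tu x,y)$ is $2$-coherent because its semantic rule is a condition on pairs $s,s'\in X$; $2$-coherence is preserved by $\wedge$; and if $\psi=\psi_1\vee\psi_2$ with $\Disjunctiondepth{\psi}\le 1$, then $\Disjunctiondepth{\psi_1}+\Disjunctiondepth{\psi_2}\le 1$, so one disjunct, say $\psi_1$, is first-order. Using Flatness of $\psi_1$ and Downwards Closure of $\psi_2$ one checks the identity: $\mA\models_X\psi_1\vee\psi_2$ iff $\mA\models_{X'}\psi_2$, where $X'=\{s\in X:\mA\not\models_s\psi_1\}$; since $\{s\in Y:\mA\not\models_s\psi_1\}=Y\cap X'$ for every $Y\subseteq X$, this identity combined with the inductive $2$-coherence of $\psi_2$ yields the $2$-coherence of $\psi_1\vee\psi_2$. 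Model-checking a \emph{fixed} $2$-coherent quantifier-free formula then reduces to evaluating it on each of the $O(|X|^2)$ subteams of $X$ of size at most $2$, and evaluating a fixed quantifier-free $\df$-formula on a team of bounded size lies in $\logspace$ (literals are table look-ups, dependence atoms are value comparisons, and a disjunction only needs the constantly many ways of splitting a team of size at most $2$); hence the case $\Disjunctiondepth{\phi}\le 1$ is already in $\nl$.

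For the general case I would induct on $\phi$ with $\Disjunctiondepth{\phi}\le 2$. If $\Disjunctiondepth{\phi}\le 1$, in particular if $\phi$ is atomic, the previous paragraph applies. If $\phi=\phi_1\wedge\phi_2$, then $\Disjunctiondepth{\phi_i}\le 2$ and $\mA\models_X\phi$ iff $\mA\models_X\phi_1$ and $\mA\models_X\phi_2$, so membership in $\nl$ follows from the induction hypothesis and the closure of $\nl$ under intersection. If $\phi=\phi_1\vee\phi_2$, then $\Disjunctiondepth{\phi_1}+\Disjunctiondepth{\phi_2}\le 2$, and up to the symmetry of $\vee$ there are exactly two possibilities. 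If $\Disjunctiondepth{\phi_1}=\Disjunctiondepth{\phi_2}=1$, then by the lemma both $\phi_1$ and $\phi_2$ are $2$-coherent and $\phi$ is literally of the form handled by Theorem~\ref{JarmoKontinenNL}, hence in $\nl$. Otherwise one disjunct, say $\phi_1$, has $\Disjunctiondepth{\phi_1}=0$, i.e.\ is first-order; then $\mA\models_X\phi$ iff $\mA\models_{X'}\phi_2$ with $X'=\{s\in X:\mA\not\models_s\phi_1\}$, the team $X'$ is computable from $(\mA,X)$ in $\logspace$, and $\Disjunctiondepth{\phi_2}\le 2$, so model-checking of $\phi$ is logspace-reducible to model-checking of $\phi_2$, which is in $\nl$ by the induction hypothesis; as $\nl$ is closed under logspace reductions, $\phi$ stays in $\nl$. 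Since $\phi$ is fixed, this induction has bounded depth and each of its steps preserves membership in $\nl$.

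The hard part will be the lemma of the first step, that is, pinning down the exact closure properties of the class of quantifier-free $\df$-formulas that are determined by their subteams of size at most $2$: one must verify that this class is closed under conjunction and under disjunction with a first-order disjunct, but not under arbitrary disjunction---which is precisely the behaviour that the additive clause for $\vee$ in the definition of $\Disjunctiondepth{\cdot}$ is meant to track. Everything downstream is routine complexity bookkeeping, invoking as a black box the nontrivial content of Theorem~\ref{JarmoKontinenNL}, namely that $\varphi\vee\psi$ is in $\nl$ whenever $\varphi$ and $\psi$ are $2$-coherent quantifier-free $\df$-formulas.
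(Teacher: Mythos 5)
Your proposal is correct and follows essentially the same route as the paper: prove that $\Disjunctiondepth{\psi}\le 1$ implies $2$-coherence, dispose of the width-$1$ case directly, and then induct on width-$2$ formulas with the three cases (both disjuncts of width $1$ via Theorem~\ref{JarmoKontinenNL}, conjunction, and absorption of a first-order disjunct into the subteam $X'=\{s\in X:\mA\not\models_s\phi_1\}$ computed in $\logspace$). The only cosmetic difference is that the paper justifies the width-$1$ base case by FO-expressibility of the formula over the team relation, whereas you use $2$-coherence to reduce to the $O(|X|^2)$ subteams of size at most two; both are routine and yield the same bound.
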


\begin{proof} We will first show that a formula  $\phi$ with $\Disjunctiondepth{\phi} =1$ is $2$-coherent.
 This follows by induction using the following facts \cite{kontinenj13}: 
 \begin{itemize}
\item dependence atoms are $2$-coherent, and first-order formulas are $1$-coherent,
\item if $\psi$ is $k$-coherent, then $\psi\vee \phi$ is also $k$-coherent assuming  $\phi$ is first-order,
\item  if   $\psi$ is $k$-coherent and  $\phi$ is $k\le j$-coherent, then  $\psi\wedge \phi$ is $j$-coherent.
\end{itemize}
It is also straightforward to check that the data complexity of a formula  $\phi$ with $\Disjunctiondepth{\phi} =1$ is in L (the formula $\phi$ can be  expressed in FO assuming the team $X$ with domain $\tu x=\Fr(\phi)$ is represented by the $n$-ary relation $X(\tu x)$). We will complete the  proof using induction on $\phi$ with  
 $\Disjunctiondepth{\phi} = 2$.  Suppose that  $\phi=\psi\vee \eta$, where  $\Disjunctiondepth{\psi}= \Disjunctiondepth{\eta} = 1$. Then the claim follows by Theorem  \ref{JarmoKontinenNL}. The case $\phi=\psi\wedge \eta$ is also clear. Suppose finally  that $\phi=\psi\vee \eta$, where $\eta$ is first-order. Note that by downward closure and flatness
 $$\mA\models_X \phi \Leftrightarrow  \mA\models_{X'}\psi,$$
where $X'=\{ s\in X \ |\ \mA\not \models_s \eta \}$. Now since $X'$ can be computed in L,  the model-checking problem of $\phi$ can be decided in NL by the induction assumption for $\psi$. %Finally assume that   $\eta=\psi\vee \phi$, where  $\Disjunctiondepth{\psi}= \Disjunctiondepth{\phi} = 1$. A simple induction shows that the formulas $\psi$ and  $\phi$ 2-coherent using the facts that  
\end{proof}

In the rest of this section we examine potential  analogues of Theorems \ref{JarmoKontinenNL} and \ref{JarmoKontinenNP} for  independence logic. It is well-known that the dependence atom $\dep(\tu x,y)$ is logically equivalent to the independence atom $ \indep{y}{\tu x}{y}$. Hence, the following is immediate from Theorem~\ref{JarmoKontinenNP}.
%an immediate corollary of Theorem~\ref{JarmoKontinenNP}.

\begin{corollary}\label{3 disjunction relativized independance}
The model checking problem for the formula 
\[\indep{y}{x}{y}\vee \indep{v}{z}{v} \vee \indep{v}{z}{v}\]
 is $\np$-complete.
\end{corollary}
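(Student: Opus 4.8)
The plan is to observe that the displayed formula is, modulo a well-known atomic equivalence, just the formula $\dep(x,y)\vee\dep(z,v)\vee\dep(z,v)$ of Theorem~\ref{JarmoKontinenNP}, and that logical equivalence preserves the model-checking problem; $\np$-hardness then transfers for free, and $\np$-membership is the generic upper bound for the data complexity of a fixed independence logic formula.

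First I would verify directly from the semantic clauses for $\dep$ and for $\bot$ that $\dep(\tu x,y)\equiv\indep{y}{\tu x}{y}$ as formulas of $\indlogic$. Unwinding the independence clause, $\mA\models_X\indep{y}{\tu x}{y}$ says that for every pair $s,s'\in X$ with $s(\tu x)=s'(\tu x)$ there is an $s''\in X$ with $s''(y)=s(y)$, $s''(\tu x)=s(\tu x)$ and $s''(y)=s'(y)$. The two constraints on $s''(y)$ force $s(y)=s'(y)$, and conversely if $s(y)=s'(y)$ then $s''=s$ witnesses the requirement; hence this condition is exactly $\mA\models_X\dep(\tu x,y)$.

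Second, since the semantic clause for $\vee$ refers to the disjuncts only through their satisfaction relation, logical equivalence is a congruence with respect to $\vee$. Applying the atomic equivalence to each of the three disjuncts gives
\[
\indep{y}{x}{y}\vee\indep{v}{z}{v}\vee\indep{v}{z}{v}\ \equiv\ \dep(x,y)\vee\dep(z,v)\vee\dep(z,v).
\]
By Locality, two equivalent formulas agree on every $\mA$ and every team $X$ whose domain contains their (common) free variables $\{x,y,z,v\}$, so the two formulas define literally the same model-checking problem, and Theorem~\ref{JarmoKontinenNP} yields $\np$-hardness.

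Finally, for $\np$-membership I would invoke the generic upper bound: the data complexity of model-checking any fixed independence logic formula is in $\np$ (equivalently, via the translation of $\indlogic$ into $\ESO$). Concretely for this formula, on input $\mA,X$ one guesses a cover $X=Y_1\cup Y_2\cup Y_3$ and checks in polynomial time that each $Y_i$ satisfies the corresponding independence atom. Combining the two bounds gives $\np$-completeness. There is no genuine obstacle here — the only thing needing a line of care is the atomic equivalence together with the routine remark that it may be substituted inside the disjunction — which is precisely why the statement is a corollary.
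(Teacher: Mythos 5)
Your proposal is correct and follows exactly the paper's route: the paper also derives this corollary immediately from Theorem~\ref{JarmoKontinenNP} via the well-known equivalence $\dep(\tu x,y)\equiv\indep{y}{\tu x}{y}$, which you merely spell out in more detail (including the correct unwinding of the conditional independence clause and the routine NP upper bound).
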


For independence logic, the situation is not as clear as for dependence logic concerning tractability.
 In the following we will exhibit a fragment of independence logic whose data complexity is in $\nl$ and which is in some sense the maximal such fragment. 
 
 %%%%%%%%
%%%%%%%%
%%%%%%%% JUHA: this does not yet work:

 %In \cite{HannulaH19} it was recently shown that the model checking problem for the fixed formula
 %$$x\subseteq y \vee u=v $$
%is already $\nl$-complete. Using this result it is straight forward to show the analoous result for independence atoms.

%\begin{proposition}\label{indepNL}
%The model checking problem for the formula 
%\[\indep{x}{}{y}\vee u=v}\]
 %is $\nl$-complete.
%\end{proposition}
%\begin{proof} We first show how $x\subset y$ can be easily expressed by a single  independence atom. The idea for the translation is adapted from \cite{galliani12}, where inclusion atoms are shown to be expressible in indepedence logic. Let  $X$ be a team. Define  $X^*$ with domain $\Dom(X)\cup \{z,b\}$ ($z$ and $b$ fresh variables) as follows:
%$$X^*:=\{ s(s(x)/z)(0/b) \ |\ s\in X  \}\cup \{ s(s(y)/z)(0/b), \  s(s(y)/z)(1/b) \ |\ s\in X  \}.  $$   
%It is now easy to check that $X\models x\subseteq y$ if and only if $X^*\models \indep {z}{}{b}$. Furthermore, by flattness of the formula $u=v$, it is immediate that $X\models x\subseteq y \vee u=v $ if and only if 
%$X^*\models \indep{x}{}{y}\vee u=v $.

%\end{proof}

\newcommand{\bcindepfo}{\mathsf{BC}(\bot,\fo)}
\newcommand{\CCm}{\mathfrak{C}^{-}}
\newcommand{\CCp}{\mathfrak{C}^{+}}

\begin{definition} The Boolean closure of an independence atom by first-order formulas, 
denoted $\bcindepfo$, is defined as follows:

\begin{itemize} 
\item Any independence atom $\indep{\tu x}{\tu y}{\tu z}$ is in $\bcindepfo$.

\item If $\phi\in \bcindepfo$, then for any formula $\phi\in \fo$, $\phi \wedge \phi$ and $\phi\vee \phi$ are  in $\bcindepfo$. 
\end{itemize}
\end{definition}

Let $\phi\in \bcindepfo$. Up to permutation of disjunction and conjunction, $\phi$ can be put into the following normalized form:

\begin{equation}\label{normalized}
\phi\equiv ((\dots ((\indep {\tu x}{\tu z}{\tu y} \wedge \phi_1) \vee \psi_1)\wedge \dots ) \wedge \phi_k) \vee \psi_k).
\end{equation}

For a structure $\mA$, $\CCp:=\bigcap_{i=1}^k \phi_i(\mA)$ and $\CCm:=\bigcup_{i=1}^k \psi_i(\mA)$, where 
 $\phi_i(\mA):=\{s \ |\ \mA\models_s \phi_i \}$.

The next lemma gives a combinatorial criterion for the satisfaction of $\bcindepfo$ formulas. 

\begin{lemma}\label{bcindepfo}
For all $\phi\in\bcindepfo$ of the form~(\ref{normalized}), structures $\mA$,  and teams $X$  the following are equivalent:
\begin{description}
\item [(A)]$\mA\models_X\phi$.
\item [(B)] (1) and (2) below hold: 
\begin{description}

\item[(1)] For all $s\in X$ either $s\in [((..(\psi_1\wedge \phi_2) \vee \dots ) \wedge \phi_k) \vee \psi_k](\mA)$ or $s\in\CCp\setminus\CCm$. 

\item[(2)]\label{c2}  For all $s_1,s_2\in X$ such that $s_1,s_2\in \CCp\setminus \CCm$, and $s_1(\tu z)=s_2(\tu z)$,   there exists $s_3\in  X\cap \CCp$,
%\juha{The original version of condition 2 stated that $s_3\in X$ but by Jouko's proof we need to require that $s_3\in  \CCp\setminus \CCm$ (unless I have misunderstood something)} \arnaud{I agree} 
such that: $s_3(\tu z)=s_1(\tu z), s_3(\tu x)=s_1(\tu x) \mbox{ and } s_3(\tu y)=s_2(\tu y)$. 

%In the rest of the paper, assignments $s_1,s_2$ as in the second item  and $s_3$ is called a \textit{witness} of $s_1,s_2$ (for formula $\phi$).

\end{description}

\end{description}Assignments  $s_1,s_2$ as in (2) will be called \textit{compatible} for formula $\phi$ and team $X$. Furthermore, $s_3$ as in (2) will be called a \textit{witness} of $s_1,s_2$ for formula $\phi$.
\end{lemma}

\begin{proof} Note that (A) is equivalent to the existence of subteams $X'$ and $Y_i$, $1\le i\le k$, of $X$ such that 

\begin{enumerate}
\item[(i)$_{k}$] $X=X'\cup Y_1\cup\ldots\cup Y_k,$
\item[(ii)$_{k}$] $\mA\models_{Y_i}\psi_i,$ for $1\le i\le k$,
\item[(iii)$_{k}$] $\mA\models_{X'\cup \bigcup_{1\le j <i} Y_{j}}\phi_{i},$ for $1\le i\le k$
\item [(iv)$_{k}$] $\ma\models_{X'}  \indep {\tu x}{\tu z}{\tu y}$.
\end{enumerate}
This can be proved by induction on $k$ as follows: The claim is clearly true for $k=1$. Suppose it holds for $k-1$, where $k>1$. We know $\phi$ is equivalent to $$ (((\dots ((\indep {\tu x}{\tu z}{\tu y} \wedge \phi_1) \vee \psi_1)\wedge \dots ) \wedge \phi_k) \vee \psi_k.
$$ Hence $\ma\models_X\phi$ if and only if $X=Z\cup Y_k$ such that $$\ma\models_Z (((\dots ((\indep {\tu x}{\tu z}{\tu y} \wedge \phi_1) \vee \psi_1)\wedge \dots ) \wedge \phi_{k-1}) \vee \psi_{k-1})\wedge\phi_k
$$ and $\ma\models_{Y_k}\psi_k$. In particular, $\ma\models_Z\phi_k$ and $$\ma\models_Z ((\dots ((\indep {\tu x}{\tu z}{\tu y} \wedge \phi_1) \vee \psi_1)\wedge \dots ) \wedge \phi_{k-1}) \vee \psi_{k-1}.$$
By Induction Hypothesis there are subteams $X'$ and $Y_i$, $1\le i< k$, of $Z$ such that (i)$_{k-1}$-(iv)$_{k-1}$ hold with $X$ replaced by $Z$.
Now the sequence $X',Y_1,\ldots,Y_k$ satisfies (i)$_{k}$-(iv)$_{k}$ and is therefore as required for the induction claim. The converse is similar.

(A) implies (B): To prove (1), suppose $s\in X$. Thus $s\in X'\cup Y_1\cup\ldots\cup Y_k$, where $X',Y_1,\ldots, Y_k$ are as defined earlier. Suppose  
$$s\notin [((..(\psi_1\wedge \phi_2) \vee \dots ) \wedge \phi_k) \vee \psi_k](\mA).
$$
 Then $\ma\not\models_s\psi_k$, whence $s\notin Y_k$. Hence $s\in X'\cup Y_1\cup\ldots \cup Y_{k-1}$, whence $\ma\models_s\phi_k$, and therefore  $$s\notin [((..(\psi_1\wedge \phi_2) \vee \dots ) \wedge \phi_{k-1}) \vee \psi_{k-1}](\mA).
$$ Eventually, step by step, we verify $s\in \CCp\setminus\CCm$ and hence
 (1) is proved. 
 
 To prove (2), suppose $s_1,s_2\in X\cap(\CCp\setminus \CCm)$ such that $s_1(\tu z)=s_2(\tu z)$. By (i)$_{k}$-(iii)$_{k}$, $s_1,s_2\in X'$. By (iv)$_{k}$ there is a witness as required.
 
 (B) implies (A): Let us denote by $X' \subseteq (X\cap \CCp )$ a minimal superset of   $X\cap(\CCp\setminus \CCm)$ satisfying   $\indep {\tu x}{\tu z}{\tu y}$. Such a set $X'$ exists by the assumption (2). Then, for $1\le i\le k$, define
 $$Y_i=(X\setminus X') \cap[\psi_i\wedge \bigwedge_{i<j\le k}\phi_{j}](\mA).$$
 Suppose $s\in X$ but $s\notin (\CCp\setminus \CCm)\subseteq X'$. By (1), $$s\in[((..(\psi_1\wedge \phi_2) \vee \dots ) \wedge \phi_k) \vee \psi_k](\mA).$$ If $s\notin Y_k$, then  
 $$s\in[((..(\psi_1\wedge \phi_2) \vee \dots \vee\psi_{k-1}) \wedge \phi_k)](\mA).$$ Hence $\ma\models_s \phi_k$.
 If $s\notin Y_{k-1}$, then 
 $$s\in[((..(\psi_1\wedge \phi_2) \vee \dots \vee\psi_{k-2}) \wedge \phi_{k-1})](\mA).$$ Hence 
 $\ma\models_s\phi_{k-1}$. Continuing this way we see that if $s\notin Y_1\cup\ldots Y_k$, then $s\in \CCp\setminus\CCm$, contrary to the assumption  $s\notin \CCp\setminus \CCm$. We have proved (i)$_{k}$. Clearly (ii)$_{k}$ and  (iii)$_{k}$ hold. Finally, (iv)$_{k}$ holds by the definition of $X'$.
\end{proof}

%The first item is true by exhaustive case distinction. The second one comes from the fact that if a tuple $s$ satisfies  $s\in\CCp$ and  $s\not\in\CCm$  then it is forced to be in the sub-team satisfying $\indep{\tu x}{\tu z}{\tu y}$.  

\begin{theorem}\label{BC(top,FO)} The data complexity of any $\phi\in\bcindepfo$ is in $\mathrm{L}$. This is true both in lax and in strict semantics.\end{theorem}
\begin{proof}
It is well known that checking whether a tuple $s$ belongs to the query result $\phi(\mA)$ of a first-order formula can be done in logarithmic space~\cite{Immerman99}.  Therefore, by the characterization of Lemma \ref{bcindepfo}, deciding whether $\mA\models_X\phi$ is also in $\mathrm{L}$.
For the strict semantics, it suffices to note that the proof of Lemma \ref{bcindepfo}  goes through  also under the strict semantics. We can use the fact that the formulas $\phi_i$ have the downwards closure property to force the  subteams $X', Y_1,..,Y_k$ in the decomposition of $X$ to be pairwise disjoint.
\end{proof}

Interestingly the analogue of Lemma \ref{bcindepfo} does not hold for inclusion atoms; It was recently shown in \cite{HannulaH19} that the data complexity of the formula
\begin{equation}\label{HH1}
 x\subseteq y \vee u=v 
\end{equation}
is already $\nl$-complete and for 
\begin{equation}\label{HH2}
(x\subseteq z \wedge y\subseteq z ) \vee u=v 
\end{equation}
the problem becomes $\mathrm{P}$-complete. These results hold under both strict and lax semantics as the other disjunct in both formulas is first-order and satisfies the downwards closure property.
Furthermore, in the last section of this article we construct a quantifier free inclusion logic formula with $\np$-complete data complexity under the strict semantics.
\begin{theorem}\label{thm13}
The data complexity of  formulae of the form $\phi_1\vee \phi_2$  with $\phi_1,\phi_2\in \bcindepfo$ is in $\nl$ under the lax semantics.
\end{theorem}

\begin{proof} 
The proof is given by a log-space reduction to the $\nl$-complete problem $2$-\pb{sat}.
Given a structure $\mA$ and a  team $X$ we  construct a $2$-\pb{cnf} propositional formula $\Phi$ such that:
\begin{equation}\label{indep-translation}
 \mA \models_X \phi_1\vee \phi_2 \iff \Phi \mbox{ is satisfiable.} 
\end{equation}

 Recall that if a team $X$ is such that  $\mA \models_X \phi_1\vee \phi_2$ then, there exists $Y, Z\subseteq X$ such that $Y\cup Z=X$ and $\mA \models_Y \phi_1\mbox{ and } \mA \models_Z \phi_2$.
 % \[  \mA \models_Y \phi_1\mbox{ and } \mA \models_Z \phi_2.\]
For each assignment $s\in X$, we introduce two Boolean variables $Y[s]$ and $Z[s]$. Our Boolean formula $\Phi$ will be defined below with these  $2|X|$ variables the set of which is denoted by $\Var(\Phi)$. It will express that the set of assignments must split into $Y$ and $Z$ but also make sure that incompatible (see Lemma~\ref{bcindepfo}) assignments do not appear in the same subteam. 
 
For each pair $s_i,s_j$ that are incompatible for $\phi_1$ on team $X$, one adds the $2$-clause: $\neg Y[s_i] \vee \neg Y[s_j]$. The conjunction of these clauses is denoted by $C_Y$.
Similarly, for each pair $s_i,s_j$ that are incompatible for $\phi_2$ on team $X$, one adds the clause: $\neg Z[s_i] \vee \neg Z[s_j]$ and call $C_Z$  the conjunction of these clauses.

%\[
%\neg Y[s_i] \vee \neg Y[s_j]. 
%\]
%
%The conjunction of these clauses is denoted by $C_Y$.
%Similarly, for each pair $s_i,s_j$ that are incompatible for $\phi_2$ on team $X$, one adds: 
%
%\[
%\neg Z[s_i] \vee \neg Z[s_j], 
%\]
%
%\noindent and analogously $C_Z$  denotes the conjunction of these clauses.
Finally, the construction of $\phi$ is completed by adding the following conjunctions:

\[ C^0:= \bigwedge \{Y[s] \vee Z[s]:{s\in X}\} \]
\[ C^1:= \bigwedge\{ \neg Y[s]:{s \textrm{ fails (B,1) of Lemma \ref{bcindepfo}}  \textrm{ for } \phi_1} \}   \]
\[ C^2:= \bigwedge\{ \neg Z[s]:{s \textrm{ fails (B,1) of Lemma \ref{bcindepfo}}  \textrm{ for } \phi_2}\}   \]

It is not hard to see  that the  formula
 $$\Phi\equiv \bigwedge _{0\le i \le 2}C^i\wedge C_Y\wedge C_Z$$
  can be constructed deterministically in log-space.
 It remains to show that the equivalence \eqref{indep-translation} holds.
 
Assume that the left-hand side of the equivalence holds. Then, there exists $Y, Z\subseteq X$ such that $Y\cup Z=X$, $\mA \models_Y \phi_1$ and  $\mA \models_Z \phi_2$.
%\begin{equation}\label{disjunction} 
%  \mA \models_Y \phi_1\mbox{ and } \mA \models_Z \phi_2.
%  \end{equation}
% 
 We construct a propositional assignment $I:\Var (\Phi)\rightarrow \{0,1\}$ as follows. For all $s\in Y$, we set $I(Y[s])=1$ and for all $s\in Z$, we set similarly $I(Z[s])=1$. It is now immediate that the all of the clauses in  $\bigwedge _{0\le i \le 2}C^i$ are satisfied by $I$.
 
 Let us consider a clause  $ \neg Y[s_i] \vee \neg Y[s_j]$ for an incompatible pair $s_i,s_j$. Then, $I(Y[s_i])=0$ or $I(Y[s_j])=0$ must hold. For a contradiction, suppose that $I(Y[s_i])=I(Y[s_j])=1$. Then since $\mA \models_Y \phi_1$ holds, by construction $s_i$ and $s_j$ must be compatible for $\phi_1$. Hence we get a contradiction and may conclude that $I$ satisfies $ \neg Y[s_i] \vee \neg Y[s_j]$. The situation is similar for each clause $\neg Z[s_i] \vee \neg Z[s_j]$. %Finally since $X=Y\cup Z$, $I$ also satisfies $C$.
 
Let us then assume that  $\Phi$ is satisfiable, and let $I:\Var (\Phi)\rightarrow \{0,1\}$ be a satisfying assignment for $\Phi$. Since $I\models C^0$, we get that $I(Y[s])=1$ or $I(Z[s])=1$ for all  $s\in X$. Let 
   $$ X_{Y}=\{s : I(Y[s])=1\} \mbox{ and } X_{Z}=\{s : I(Z[s])=1\}.  $$
 Now $X_Y\cup X_Z=X$ and clauses $C^1$  ($C^2$) ensure that each $s\in X_Y$ ($s\in X_Z$) satisfies condition (B,1) of Lemma   \ref{bcindepfo}.
 
 % \arnaud{This part of the proof should be checked carefully. I am not sure it is the best way to write it.}
 We will next show how the  sets $X_Y$ and $X_Z$ can be extended to sets $Y$ and $Z$ satisfying also condition (B,2) of Lemma   \ref{bcindepfo} and consequently $\mA \models_Y \phi_1$ and  $\mA \models_Z \phi_2$.
 Note first that, since $I$ satisfies $\Phi$, for all $s_1,s_2\in X_{Y}$, $\Phi$ cannot have a clause of the form $\neg Y[s_1]\vee \neg Y[s_2]$, and hence $s_1,s_2$ are compatible for $\phi_1$. Analogously we see that all  $s_1,s_2\in X_{Z}$ are compatible for  $\phi_2$.
We will define the sets  $Y$ and $Z$ incrementally by first initializing them to  $X_Y$ and $X_Z$, respectively. Note that even if $X_Y\cup X_Z=X$, no decision has been made regarding the membership of   assignments $s$ in $Y$ (resp. $Z$) such that $I(Y[s])=0$ (resp. $I(Z[s])=0$).  Let us first consider $Y$. Until no changes occur,  we consider all pairs $s_1,s_2\in Y\cap (\CCp\setminus\CCm)$ (where the sets $\CCp$ and $\CCm$ are taken with respect to $\phi_1$) such that  $s_1(\tu z)= s_2(\tu z)$  and add into $Y$ (if they are not already in) all tuples $s_3$ such that $s_3$ is a witness for the pair $(s_1,s_2)$ regarding formula $\phi_1$. Since by construction $s_1,s_2$ are compatible then at least one such  $s_3$ exists (but may be initially outside of $Y$). We prove below that this strategy is safe.
 %\begin{itemize}
 %\item either $s_1[\tu x_1]\neq s_2[\tu x_1]$. In this case, we simply add $s_1$ and $s_2$ if they are not in already.
  %\item or $s_1[\tu x_1]= s_2[\tu x_1]$. 
% In this case,
 %there exists $s_3,s_4\in X$ such that $\star(s_1,s_2,s_3,s_4)$ is true.
 % \[  s_1[\tu u_1]=s_3[\tu u_1], s_2[\tu v_1]=s_3[\tu v_1] \mbox{ and }  s_2[\tu u_1]=s_4[\tu u_1], s_1[\tu v_1]=s_4[\tu v_1].\]
First of all, it is  easily seen that any pair among $\{s_1,s_2,s_3\}$ is compatible for $\phi_1$. Therefore, it remains to show that the new assignments $s_3$  are compatible with every other element $s$ added to $Y$ so far. Suppose this is not the case and that there exists $s\in (Y \cap (\CCp\setminus\CCm)) \setminus \{s_1,s_2\}$ such that $s_3$ and $s$ are incompatible for $\phi_1$. Note that  for incompatibility we  must have $s_3(\tu z)=s(\tu z)$. Since $s_1,s_2$, and $s$ are in $Y$ they are all pairwise compatible. Hence, there exists $t_1$ such that $t_1$ is a witness for the pair $(s_1,s)$.
Then,  $t_1(\tu x)=s_1(x)=s_3(\tu x)$, and  $t_1(\tu y)=s(\tu y)$. %Analogously, $t_2(\tu x)=s(\tu x)$ and $t_2(\tu y)=s_2(\tu y)=s_3(\tu y)$. 
 %But we also know that $s_2(\tu u_1)= s_4(\tu u_1)$ and $s_1(\tu v_1)= s_4(\tu v_1)$. 
 Consequently, $t_1$ is also a witness for $s_3,s$ hence, $s_3$ and $s$ are compatible which is a contradiction. Therefore,  the assignment $s_3$  can be safely added to $Y$. The set  $Z$ is defined analogously. %Note that $Y$ and $Z$ will not be necessarily disjoint.
By the construction,  the sets $Y$ and $Z$ satisfy conditions (B,1) and (B,2) of Lemma   \ref{bcindepfo} for $\phi_1$ and $\phi_2$, respectively,  and hence it holds that  $\mA \models_Y \phi_1$  and  $\mA\models_Z \phi_2$.

 %\[  \mA \models_Y \phi_1  \mbox{ and } \mA\models_Z \phi_2.\]
  
%\end{itemize}

%  Since the assignment $I$ also satisfy the subformula $C$ then, $X=Y\cup Z$.
  \end{proof}

It is worth noting that the last step of the proof of the above theorem does not seem to  work  under strict semantics. It is an open question whether Theorem \ref{thm13} holds under the strict semantics.

We will next show that a slight relaxation on the form of the formula immediately yields intractability 
of model-checking.
\begin{theorem}
There exists a formula $\phi_1\vee \phi_2$ the model-checking problem of which is $\np$-complete and such that:

\begin{itemize}
    \item $\phi_1\in\bcindepfo$ and
    \item $\phi_2$ is the conjunction of \textit{two} independence atoms.
\end{itemize} 
\end{theorem}

\begin{proof}
Define $\phi_1 := w\neq 1 \wedge x \perp_{t} y$, $\phi_2 := c_1\perp_c c_2 \wedge x \perp_z y $. We will reduce  $3$-SAT to the  model-checking problem  of $\phi_1\vee\phi_2$. %The data complexity of the formula:
%\[\phi \equiv \left(w\neq 1 \wedge x \perp_{t} y \right) \vee \left(c_1\perp_c c_2 \wedge x \perp_z y\right)\]
%is NP-complete by a reduction to $3$-SAT.
 Let $\Phi = \bigwedge_{i=1}^n C_i$ be  a $3$-SAT instance. 
 Each $C_i= p_{i_1}$ is of the form $p_{i_2} \vee p_{i_3}$ % is a disjunction of $3$ literals
  with  $p_{i_1}, p_{i_2}, p_{i_3} \in \{\, v_1,\hdots, v_m,\neg v_1,\hdots, \neg v_m\,\}$. To this instance we associate a structure $\mA$ of the empty vocabulary and a team $X$ on the variables $w,c,c_1,c_2,z,x,y,t$. The the universe of the structure $\mA$ is composed of $\{1,\ldots,m\}$, $m$ new elements $a_1,\hdots,a_m$ and of $\{\, v_1,\hdots, v_m,\neg v_1,\hdots, \neg v_m\,\}\cup\{\,0,1\,\}$. For each clause $C_i$ we add  in $X$ the $6$ assignments displayed on the left below,  and for each variable $v_i$,  we add to $X$ the $2$ assignments on the right:
\[
\begin{array}{lcr}\begin{array}{|c|c|c|c|c|c|c|c|}
\hline
w&c&c_1&c_2&z&x&y&t\\
\hline \hline
0&i&1&1&i_1&p_{i_1}&p_{i_1}&a_{6i+1}\\
0&i&1&1&i_2&p_{i_2}&p_{i_2}&a_{6i+2}\\
0&i&1&1&i_3&p_{i_3}&p_{i_3}&a_{6i+3}\\
\hline
1&i&0&0&0&0&0&a_{6i+4}\\
1&i&1&0&0&0&0&a_{6i+4}\\
1&i&0&1&0&0&0&a_{6i+4}\\
\hline
\end{array}& &
\begin{array}{|c|c|c|c|c|c|c|c|}
\hline
w&c&c_1&c_2&z&x&y&t\\
\hline \hline
0&0&0&0&i&v_i&v_i&a_{6(n+1)+i}\\
0&0&0&0&i&\neg v_i&\neg v_i&a_{6(n+1)+i}\\
\hline
\end{array}
\end{array}\]
We will next show that $\Phi$ is satisfiable if and only if $\mA \models_X \phi$.

%\begin{itemize}
%	\item[$\Rightarrow$] 

Suppose there is an assignment  $I:\{\,v_1,\hdots,v_m\,\} \rightarrow \{0,1\}$ that evaluates  $\Phi$ to true, i.e., at least one literal in each clause is evaluated to $1$. We have to split $X$ into two sub-teams $X = Y \cup Z$ such that $\mA \models_Y \left(w\neq 1 \wedge x \perp_{t} y \right)$ and $\mA \models_Z \left(c_1\perp_c c_2 \wedge x \perp_z y\right)$. We must put every assignment $s \in X$ such that $s(w) = 1$ in $Z$. There are exactly three such assignments per clause. We put in $Z$ every assignment $s$ such that $s(x) = v_i$ if $I(v_i) = 1$, and $s(x) = \neg v_i$ if $I(v_i) = 0$. The other assignments are put into $Y$.
	
	For each clause $C_i$, one literal $p_{i_1},p_{i_2},p_{i_3}$ is assigned to $1$ by $I$. Then there is at least one assignment $s(c,c_1,c_2) = (i,1,1)$ in $Z$. In $Z$, the  assignments mapping $c$ to $i$ map $(c,c_1,c_2)$ to $(i,1,1), (i,1,0),(i,0,1)$ or $(i,0,0)$. Thus $\mA \models_Z c_1\perp_c c_2$. 
	
	If $s_1,s_2 \in Z$  are such that $s_1(z)=s_2(z) = i$, then $s_1(x)$ (analogously $s_2(x)$) is $v_{i}$ if $I(v_i) = 1$, $\neg v_i$ otherwise.  Therefore, $s_1(x) = s_2(x) = s_1(y) = s_2(y)$, and hence $\mA \models_Z x \perp_z y$  holds.
	
	As for $Y$, it is immediate that $\mA \models_Y w\neq 1$. The only pair of assignments $s_1,s_2$ in $X$ such that $s_1(t)=s_2(t)$ are $(0,0,0,0,i,v_i,v_i,a_k)$ and $(0,0,0,0,i,\neg v_i,\neg v_i,a_k)$, for some $k$. Only one of them is in $Y$ ($s_1$ if $I(v_i) = 1$, $s_2$ otherwise). Thus $\mA \models_Y x \perp_t y$.
	
%	\item[$\Leftarrow$] 

Suppose then that $X = Y \cup Z$ such that $\mA \models_Y \left(w\neq 1 \wedge x \perp_{t} y \right)$ and $\mA \models_Z \left(c_1\perp_c c_2 \wedge x \perp_z y\right)$. Define an assignment  $I$ of the variables of $\Phi$ by: $I(v_i) = 1$ if $s_i^t:=(0,0,0,0,i,v_i,v_i,a_k)$ is in $Z$, $I(v_i) = 0$ if $s_i^f:=(0,0,0,0,i,\neg v_i,\neg v_i,a_k)$ is in $Z$. Since $\mA \models_Z x \perp_z y$, $s_i^f(z) = s_i^t(z)$ and because there is no $s' \in X$ such that $s'(x) = s_i^f(x), s'(y) = s_i^t(y)$ and $s'(z) = s_i^f(z)$, for each $i$ at most one of $s_i^f$, $s_i^t$ can be in $Z$. Similarly, because $\mA \models_Y x \perp_t y$, only one of them can be in $Y$. Thus $I$ is indeed a function.

Since $\mA \models_Z x\perp_z y$ and there is no assignment in $X$ such that $(x,y)\mapsto(v_i,\neg v_i)$, every pair $s_1,s_2 \in Z$ such that $s_1(z) = s_2(z) =i$ must have the same value of $x$ and $y$. Every assignment representing a clause in $Z$ respects the choice of $I$. Furthermore, since $\mA \models_Z c_1 \perp_c c_2$ and $(w,c,c_1,c_2) \mapsto (1,i,0,0), (1,i,1,0),(1,i,0,1)$ are in $Z$, $(1,i,1,1)$ must be in $Z$, i.e., at least one assignment per clause is in $Z$. By the above we may conclude that $I$ satisfies $\Phi$: at least one literal per clause is evaluate to $1$ by $I$.
%\end{itemize}

Finally it is not difficult to check that  the splitting of $X$ into $Y$ and $Z$, if possible, can be  realized with disjoint $Y$ and $Z$. Hence the hardness holds  under both strict and lax semantics %. \juha{Last remark added.}

\end{proof}

It is worth noting that the formula in the previous result has disjunction-width two whereas the formula in Corollary~\ref{3 disjunction relativized independance} simulating dependence atoms by conditional independence atoms has width three. We will next show  an analog of these results for  pure (i.e., non-conditional) independence atoms. %The results holds both under strict and lax semantics.
%The following theorem further shows that even unary pure independence atoms suffice to make the model-cheking problem NP-complete.

\begin{theorem}\label{pure3-disjunction}
	The model checking problem 	is $\np$-complete for  
	$$\phi := \left(x \perp y\right) \vee \left(x \perp y\right) \vee \left(x \perp y\right) \vee x \neq y.$$
%	is $\np$-complete.
\end{theorem}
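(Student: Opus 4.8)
The plan is to prove $\np$-membership by a guess-and-check argument and $\np$-hardness by a logspace reduction from $3$-colorability of graphs. The starting observation is that $\phi$ mentions only the variables $x,y$, so by \emph{Locality} we may restrict attention to teams with $\Dom(X)=\{x,y\}$, i.e.\ to binary relations on the domain $A$ of $\mA$; and for such a team $Y$ the semantic rule for the pure independence atom says exactly that $Y(xy)=Y(x)\times Y(y)$, that is, $Y$ coincides with the combinatorial rectangle $Y(x)\times Y(y)\subseteq A\times A$. Iterating the clause for $\vee$ (using associativity and commutativity of $\cup$, so that the parenthesization of the four-fold disjunction is irrelevant), $\mA\models_X\phi$ holds if and only if $X$, viewed as a relation, can be written as $X=(S_1\times T_1)\cup(S_2\times T_2)\cup(S_3\times T_3)\cup W$ with each $S_j\times T_j\subseteq X$ and $W\subseteq X$ disjoint from the diagonal $\Delta=\{(a,a):a\in A\}$. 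From this reformulation $\np$-membership is immediate: guess subteams $Y_1,Y_2,Y_3,W$ and verify $Y_1\cup Y_2\cup Y_3\cup W=X$, $\mA\models_{Y_j}x\perp y$ for $j=1,2,3$, and $\mA\models_W x\neq y$, all in polynomial time.

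For hardness, given a graph $G=(V,E)$ I would let $\mA$ be the structure with domain $A=V$ over the empty vocabulary, and let $X$ be the team with $\Dom(X)=\{x,y\}$ consisting of all assignments $s$ with $(s(x),s(y))\in\{(a,b)\in V^2 : a=b\text{ or }\{a,b\}\notin E\}$; thus $X$ contains the whole diagonal and, for every non-edge $ab$, both $(a,b)$ and $(b,a)$. The claim is that $\mA\models_X\phi$ if and only if $G$ is $3$-colorable, and the equivalence is read off the rectangle-cover reformulation above via the correspondence ``rectangle $\subseteq X$ through diagonal elements'' $\leftrightarrow$ ``clique in the complement of $G$''.

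For the ``if'' direction, from a proper $3$-coloring $V=V_1\cup V_2\cup V_3$ I set $Y_j:=V_j\times V_j$ and $W:=X\setminus\Delta$. Since each $V_j$ is independent in $G$ we have $Y_j\subseteq X$, and $Y_j(xy)=V_j\times V_j=Y_j(x)\times Y_j(y)$, so $\mA\models_{Y_j}x\perp y$; moreover $\mA\models_W x\neq y$, and $Y_1\cup Y_2\cup Y_3\cup W=X$ because every diagonal pair of $X$ lies in some $Y_j$ and every other pair of $X$ lies in $W$. For the ``only if'' direction, fix subteams $Y_1,Y_2,Y_3,W$ witnessing $\mA\models_X\phi$. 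For each $a\in V$ the pair $(a,a)$ belongs to $X$ but not to $W$ (it falsifies $x\neq y$), so $(a,a)\in Y_{c(a)}$ for some colour $c(a)\in\{1,2,3\}$. If $a\neq b$ and $c(a)=c(b)=j$, then $a\in Y_j(x)$ and $b\in Y_j(y)$, so $\mA\models_{Y_j}x\perp y$ forces $(a,b)\in Y_j(xy)\subseteq X$, which by the definition of $X$ (and $a\neq b$) means $\{a,b\}\notin E$; hence each colour class is independent and $c$ is a proper $3$-coloring. The reduction is computable in logarithmic space ($|X|\le|V|^2$) and $3$-colorability is $\np$-hard, so $\phi$ is $\np$-hard; together with membership this gives $\np$-completeness.

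The part needing the most care is the reformulation together with the forward direction: one must see that the disjunct $x\neq y$ acts as a ``garbage collector'' absorbing every off-diagonal pair of $X$, leaving precisely the diagonal pairs to be covered — i.e.\ $3$-colored — by the three rectangle subteams, and that two diagonal elements forced into one rectangle $S_j\times T_j\subseteq X$ certify a non-edge of $G$. Everything else ($\np$-membership, the backward direction, the cost of the reduction) is routine, though one should also sanity-check the degenerate cases (empty $V$, isolated vertices, colorings using fewer than three colours, empty subteams), which cause no trouble.
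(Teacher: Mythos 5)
Your proof is correct and takes essentially the same approach as the paper: the paper reduces from $3$-clique-cover using the team consisting of the diagonal plus both orientations of each \emph{edge}, which is precisely your $3$-colorability reduction applied to the complement graph, and your rectangle/coloring argument mirrors its clique-cover argument step for step (diagonal pairs get colored, the $x\neq y$ disjunct absorbs the off-diagonal pairs). Your explicit rectangle characterization of $x\perp y$ and the $\np$-membership check are sound additions but do not change the route.
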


\begin{proof} Membership in $\np$ is obvious. Hardness is proved by reduction from the following $\np$-complete problem $3$-clique cover (see~\cite{GareyJ1979}): Given $G=(V_G,E_G)$ a finite graph, decide if there exists a collection of disjoint triangle subgraphs of $G$, that cover the vertex set $V_G$ of $G$.

	So, let $G=(V_G,E_G)$ be a finite graph, $\mA = V_G $ be a first-order structure of the empty signature, and $X$ the team
	 $$X = \{\,(v,v) \, : \, v \in V_G\,\}\cup \{\,(v_1,v_2), (v_2,v_1) \, : \, (v_1,v_2) \in E_G\,\},$$
	  where $(v_1,v_2)$  denotes the assignment $s$ with $s(x)=v_1$ and  $s(y)=v_2$.
		 We are going to show that $G$ has a $3$-clique cover if and only if $\mA \models_X \phi$.
	 
	 $$\begin{array}{|c|c|}
		\hline
		\col x & \col y \\
		\hline
		v_1 & v_1 \\
		v_2 & v_2 \\
		\vdots & \vdots \\
		v_{|V_G|} & v_{|V_G|} \\
		\hline
		v_1 & v_2 \\
		v_2 & v_1 \\
		v_2 & v_4 \\
		v_4 & v_2 \\
		\vdots & \vdots \\
		\hline
	\end{array}$$
	
%	 \begin{itemize}
%	 \item[$\Rightarrow$] 

Suppose that $G$ has a $3$-clique cover, i.e., there exists $C_1,C_2,C_3$ three cliques such that $V_G = V_{C_1} \cup V_{C_2} \cup V_{C_3}$. We have to prove  $\mA \models_X \phi$. For $i \in \{\,1,2,3\,\}$, let $$X_i = \{\,(v,v)\, : \, v \in C_i\,\} \cup  \{\,(v_1,v_2), (v_2,v_1)\, |\, v_1,v_2 \in C_i\,\}$$ and $X_4 = X\setminus (X_1\cup X_2 \cup X_3)$.
	 
	 Because it is a vertex cover, every assignment of the form $(v,v)$ is contained in $X_1\cup X_2\cup X_3$ and not in $X_4$, i.e. $\mA \models_{X_4} x \neq y$.
	 
	 Let $i \in \{\,1,2,3\,\}$ and $s,s' \in X_i$ be two assignments. If $s(x,y)=(v,v)$ and $s'(x,y)=(v',v')$, then $v,v' \in C_i$ and there exists two assignments $s_1,s_2$  in $X_i$ such that $s_1(x,y) = (v,v')$ and $s_2(x,y)=(v',v)$ by construction. Similarly if $s(x,y) = (v,v)$ and $s'(x,y) = (v_1,v_2)$, there exists in $X_i$ the assignments $(v,v_2)$ and $(v_1,v)$ (even if $v_1 = v$ or $v_2 = v$). Finally, if $s(x,y)=(v_1,v_2)$ and $s'(x,y)=(v_1',v_2')$, the assignments $(v_1,v_1),(v_2,v_2),(v_1',v_1'),(v_2',v_2')$ are in $X_i$ and so are $(v_1,v_2'),(v_1',v_2)$. The above implies that $\mA \models_{X_i} x \perp y$.
	 
%	\item[$\Leftarrow$]
For the converse implication,   suppose that $\mA \models_X  \left(x \perp y\right) \vee \left(x \perp y\right) \vee \left(x \perp y\right) \vee x \neq y$, then $X = X_1 \cup X_2 \cup X_3 \cup X_4$ such that $\mA \models_{X_i} x \perp y$ for $i \in \{\,1,2,3\,\}$ and $\mA \models_{X_4} x \neq y$.
	Let $C_i$,  for $i \in \{\,1,2,3\,\}$,  be the graph whose vertices are $\{\, v \, : \,(v,v) \in X_i\,\}$ and  edges are 
	 \[ \{\, (v_1,v_2)\, : \,(v_1,v_2) \in X_i \text{ and }(v_2,v_1) \in X_i\,\}. \]
	 Note that some $C_i$ can be empty but they form a vertex cover of $G$ as no assignment $(v,v)$ is in $X_4$. If $v,v'\in C_i$ then $(v,v)$ and $(v',v')$ are in $X_i$. By independence, $(v,v')$ and $(v',v)$ are also in $X_i$. Therefore the edge $(v,v')$ is in $C_i$: $C_i$ is a clique. Therefore, $G$ is covered by the three disjoint cliques $C_1' = C_1$, $C_2' = C_2 \setminus  C_1$ and $C_3' = C_3 \setminus (C_1 \cup C_2)$. 
	
		If we are in the strict semantics, the sets $X_i$ are disjoint and  so are the cliques. If we are in lax semantic, $G$ is covered by the disjoint cliques $C_1' = C_1$, $C_2' = C_2 \setminus C_1$ and $C_3' = C_3 \setminus (C_1 \cup C_2)$. In both cases, $G$ has a $3$-clique cover.
%	 \end{itemize}
		 
\end{proof}

%The formalisms above can be generalized in two ways: either by allowing existential quantification (with or without a limited form of disjunction), or by allowing more than one disjunction. We will prove that all choices lead to intractability of the model-checking problem.

\subsection{The case of quantified formulas}

In this section we show that  existential quantification even without disjunction and even in the empty vocabulary makes the model checking problem hard  for both dependence and independence logic. %Let $\AndExistDep$ and  $\AndExistInd$  be the fragments of $\df$ and $\Ind$ using existential quantification and conjunction only. 

\begin{theorem}\label{Exists+wedge} 
There exists a formula $\phi$ with $\np$-complete model-checking problem where $\phi\equiv \exists x \psi$ and $\psi$ is a conjunction of two dependence atoms.

%There is a formula $\phi$ of dependence logic of empty non-logical vocabulary built with $\exists $ and $\wedge$ whose model-checking problem is  $\np$-complete.
 
\end{theorem}

\begin{proof}
Before giving the proof, let us consider the following problem: Given  a graph $G$ with  $n^2$ vertices, are the vertices of $G$ colourable with $n$ colors (such that no adjacent vertices carries the same color). This problem is easily seen to be a generalization of the well-known $\np$-complete $3$-coloring problem (see~\cite{GareyJ1979}). Indeed, let $G=(V_0,E_0)$ be an undirected graph with $|V_0|=n$ vertices, let $K_{n-3}=(V_1,E_1)$ be the complete graph with $|V_1|=n-3$ vertices (hence $|E_1|=(n-3)(n-4)/2$ edges) and let $G'$ be the graph with vertex set $V'=V_0\cup V_1\cup V_2$ with $V_2$ of size $n^2-n-n+3$ (hence $|V'|=n^2$) and edge set $E_0\cup E_1 \cup \{\{x,y\}: x\in V_0, y\in V_1\}$. Then, it is easy to see that $G$ is $3$-colorable iff $G'$ is  
	$n$-colorable.

Let us now define  the formula $\phi$ as follows:
	$$\phi \equiv \exists x(\dep\left(x,r_1,r_2,e,m\right)\wedge \dep \left(v_1,v_2,x\right)).$$ 
	We will reduce the problem of determining whether a graph $G$ with  $n^2$ vertices is $n$-colorable to the model-checking problem of $\phi$. 

	Let $G$ be a graph with $n^2$ vertices $V_G=\{\alpha_0,\hdots,\alpha_{n^2-1}\}$, $\mA=\{0,\ldots, n-1\}$ a first order structure of the empty signature and $X = \{s_i^j \ |\  i \in \{0,\ldots ,n^2-1\},\  0 \le j\le i \}$ be a team such that :

	\begin{itemize}
		\item $s_i^j(v_1) = \lfloor i/n\rfloor$ and $s_i^j(v_2) = i \mod n$. %for every $j \in \{0,\ldots, i \}$. 
		In other words, $s_i^j(v_1,v_2)$ is the decomposition of $i$ in base $n$.
		\item $s_i^j(r_1) = \lfloor j/n\rfloor $ and $s_i^j(r_2) = j \mod n$. %, for every  $j \in \{0,\ldots, i \}$. 
		In other words, $s_i^j(r_1,r_2)$ is the decomposition of $j$  in base $n$.
		
		\item $s_i^j(m) = 0$ if $i \neq j$ and $s_i^i(m) = 1$.%, for any $j \in [1,i]$.
		\item $s_{j}^{i}(e) = 1$, if $j=i$, or if there is an edge between $\alpha_i$ and $\alpha_{j}$ with $j > i$. Otherwise $s_{j}^{i}(e) = 0$.
	\end{itemize}
	
	For example, for $n= 2$ and $E_G = \{\{0,1\},\{1,2\},\{0,2\},\{2,3\}\}$, we  obtain the following team on the universe $A= \{0,1\}$:
	
	$$\begin{array}{|c|c|c|c|c|c|c|}
	\hline
	x & v_1 & v_2 & r_1 & r_2 & m & e \\
	\hline
	& 0 & 0 & 0 & 0 & 1 & 1 \\
	\hline
	& 0 & 1 & 0 & 0 & 0 & 1 \\
	& 0 & 1 & 0 & 1 & 1 & 1 \\
	\hline
	& 1 & 0 & 0 & 0 & 0 & 1 \\
	& 1 & 0 & 0 & 1 & 0 & 1 \\
	& 1 & 0 & 1 & 0 & 1 & 1 \\
	\hline
	& 1 & 1 & 0 & 0 & 0 & 0 \\
	& 1 & 1 & 0 & 1 & 0 & 0 \\
	& 1 & 1 & 1 & 0 & 0 & 1 \\
	& 1 & 1 & 1 & 1 & 1 & 1 \\
	\hline
	\end{array}$$
	$$G \text{ is } n\text{-colourable iff } \mA\models_X \phi$$
	We are going to demonstrate that $\mA \models_X \phi$ if and only if $G$ is $n$-colourable.
	
	First the left to right implication. Since $\mA\models_X \phi$ there exists  a mapping  $F\colon X\rightarrow \mathcal{P}(A)\setminus \{\emptyset\}$ such that $\mA\models_{X(F/x)} \dep\left(x,r_1,r_2,e,m\right)\wedge \dep \left(v_1,v_2,x\right)$.  By downwards closure, we may assume without loss of generality that $F(s)$ is a singleton for all $s\in X$.
	% Let $X'$ be an extension of $X$ such that $\mA\models_{X'} \dep\left(x,r_1,r_2,e,m\right)\wedge \dep \left(v_1,v_2,x\right)$. 
	%Let $F: X \rightarrow A$ be defined by $F(s_i^j) = s_i^j(x)$.
Since  $\dep(v_1,v_2,x)$ holds,  $F$ induces a mapping  $F'\colon V_G \rightarrow A$, by 
$F'(\alpha_i) := a $, for the unique $a$ such that $F(s_i^0)=\{a\}$. If there is an edge between $\alpha_i$ and $\alpha_{i'}$, $i' > i$, then $s_{i'}^{i}(e) = 1 = s_i^{i}(e)$. Furthermore, $s_{i'}^{i}(r_1,r_2) = s_{i}^{i}(r_1,r_2)$ but $ s_{i'}^{i}(m) = 0$ and $s_{i}^{i}(m) = 1$. Therefore, because the atom  $\dep (x,r_1,r_2,e,m)$ holds,  we must have $F(s_{i}^{i}) \neq  F(s_{i'}^{i})$. Thus  $F'(\alpha_i) \neq F'(\alpha_{i'})$ if there is an edge between $\alpha_i$ and $\alpha_{i'}$.  This shows that $F'$ is a colouring of $G$ with $|A| =  n$ colours.
	
	Let us then consider the right to left implication. Let $c: V_G \rightarrow \{0,\ldots ,n-1\}$ be an $n$ colouring. We extend $X$ to variable $x$ with a new team $X'$ such that $s^j_i(x) = c(\alpha_i)$. The value of $x$ depends only on $i$, which is encoded in $(v_1,v_2)$, i.e., $\mA \models_{X'} \dep(v_1,v_2,x)$.
	
	 Let $s_i^j, s_{i'}^{j'}$ be two assignments of $X'$. Suppose that $s_i^j(r_1,r_2,e) = s_{i'}^{j'}(r_1,r_2,e)$ but $s_i^j(m) \neq s_{i'}^{j'}(m)$. In this case we must check that $s_i^j(x)$ is different from $s_{i'}^{j'}(x)$ (because $\mA \models_{X'} \dep(x,r_1,r_2,e,m)$). Now it holds that $j=j'$ because $s_i^j(r_1,r_2) = s_{i'}^{j'}(r_1,r_2)$. Furthermore, since $s_i^j(m) \neq s_{i'}^{j'}(m)$, either $i=j$ or $i'=j'$. Let us suppose $i=j$. Because $1= s^i_i(e)=  s^j_{i}(e) = s^{j'}_{i'}(e)=s^i_{i'}(e)$, there is an edge between $\alpha_i$ and $\alpha_{i'}$ in $G$. Therefore $c(i) \neq c(i')$ and $s^j_i(x) \neq s^{j'}_{i'}(x)$.
	\end{proof}

%\begin{corollary}
%The model-checking problem for  is $\np$-complete. 
%\end{corollary}
By encoding dependence atoms in terms of conditional independence atoms we get the analogous results for free for independence logic.
\begin{corollary}
There is a formula $\phi$ of independence logic of empty non-logical vocabulary built with $\exists $ and $\wedge$ whose model-checking problem is  $\np$-complete.
 
\end{corollary}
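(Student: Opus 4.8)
The plan is to obtain the corollary for free by exhibiting a purely syntactic translation of dependence atoms into conditional independence atoms, and then applying Theorem~\ref{Exists+wedge} verbatim. Recall the well-known equivalence $\dep(\tu x, y) \equiv \indep{y}{\tu x}{y}$, already used in the paper to derive Corollary~\ref{3 disjunction relativized independance}: unwinding the semantic clause for $\indep{y}{\tu x}{y}$ on a team $X$, one finds that for all $s, s' \in X$ with $s(\tu x) = s'(\tu x)$ there must exist $s'' \in X$ with $s''(\tu x y) = s(\tu x) s'(y)$ and (from the first copy of $y$) $s''(y) = s(y)$; combining these forces $s(y) = s'(y)$, which is precisely the dependence-atom condition, and conversely the witness $s'' = s$ always works when the dependence atom holds. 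Crucially this equivalence is \emph{local} and uses neither quantifiers nor disjunctions, so it lifts through the connectives $\wedge$ and $\exists$.

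First I would take the formula $\phi \equiv \exists x\,\dep(x,r_1,r_2,e,m)\wedge \dep(v_1,v_2,x)$ from the proof of Theorem~\ref{Exists+wedge}. Replacing each dependence atom by the corresponding independence atom gives
\[
\phi' \equiv \exists x\bigl(\indep{m}{x\,r_1\,r_2\,e}{m}\bigr) \wedge \bigl(\indep{x}{v_1\,v_2}{x}\bigr),
\]
a formula of $\indlogic$ over the empty non-logical vocabulary, built only from $\exists$ and $\wedge$. Second I would argue that $\phi \equiv \phi'$: since $\phi$ and $\phi'$ differ only by the atomic subformulas, and each $\df$-atom is logically equivalent to the $\indlogic$-atom that replaces it, a routine induction on formula structure through the clauses for $\wedge$ and $\exists$ in Definition~2 shows $\mA \models_X \phi \iff \mA \models_X \phi'$ for every $\mA$ and every team $X$ with $\Fr(\phi) = \Fr(\phi') \subseteq \Dom(X)$. (The induction is immediate because the semantic clauses for $\wedge$ and $\exists$ refer only to the satisfaction relation of the immediate subformulas, never to their syntactic shape.)

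Finally, the model-checking problem for $\phi'$ is therefore the same decision problem as the model-checking problem for $\phi$ on identical inputs, so NP-hardness transfers directly from Theorem~\ref{Exists+wedge}, and membership in NP is clear since the data complexity of any fixed independence logic formula is in NP (the witnessing subteams and the function $F$ for the existential quantifier can be guessed and verified in polynomial time). Hence the model-checking problem for $\phi'$ is $\np$-complete, proving the corollary. There is essentially no obstacle here: the only point requiring any care is checking that the atom-level equivalence genuinely passes through $\exists$ under team semantics, i.e.\ that replacing an atom by an equivalent one inside $\exists x(\cdots)$ preserves the chosen-function semantics — but this is automatic from the compositional form of the semantic clauses and needs no new idea.
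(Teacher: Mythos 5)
Your proposal is correct and follows exactly the paper's route: the paper derives the corollary "for free" from Theorem~\ref{Exists+wedge} by replacing each dependence atom $\dep(\tu x,y)$ with the logically equivalent conditional independence atom $\indep{y}{\tu x}{y}$, which is precisely your substitution, and your verification that the atom-level equivalence lifts compositionally through $\wedge$ and $\exists$ is the (routine) detail the paper leaves implicit.
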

We will next show that this corollary can be strengthened in the case of independence logic under the strict semantics. In other words, we will now show a  version of Theorem \ref{Exists+wedge}  for pure independence atoms under the strict semantics. Let $\psi(t,c,v)$ be the following formula over signature $\sg=\{R\}$, where $R$ is a ternary relation symbol and  $t,c$ and $v$ are free variables:

\[  \psi(t,c,v) := \exists x (t \perp x \wedge R(c,v,x)).\]

\begin{proposition}\label{StrictInd}
For all propositional formulas $\phi$ in 3-\pb{cnf}, one can compute in polynomial time a  team $X$, with $\Dom(X)=\{t,c,v\}$ and a structure $\mA$ such that:

\[ \phi \mbox{ is satisfiable } \iff  \mA \models_X \psi(t,c,v),\]
under the strict semantics.
\end{proposition}

\begin{proof}
Without loss of generality, let $\phi=\bigwedge_{i=1}^m C_i$ be a $3$-\pb{cnf} formula over a set $V=\{v_1,\ldots,v_n\}$ of variables of size $n$. Let $C_i=l_{i_1}\vee l_{i_2}\vee l_{i_3}$, with $l_{i_j}\in\{v_1,...,v_n,\neg v_1, ..., \neg v_n\}$. We first describe the  relation $R$ built on the domain $D=\{0,\ldots, m+n,v_1,...,v_n,\neg v_1, ..., \neg v_n\}$:

\[
\begin{array}{l}
R = 
\ \{(0,i,v_i), (0,i,\neg v_i): 1\leq i \leq n\} \cup \\ 
 \ \{(i,0,l_{i_1}), (i,0,l_{i_2}),  (i,0,l_{i_3}): 1\leq i \leq m, \ C_i=l_{i_1}\vee l_{i_2}\vee l_{i_3}\}  \cup\\
 \  \{(m+i,0,v_i), (m+i,0,\neg v_i): 1\leq i \leq n\}. 
\end{array} 
\]
Let $\mA=(D,R)$.
Finally, team $X$ is the union of the two assignment sets $Y$ and $Z$ below:

 \[
 \begin{array}{cc}
  Y:
  \begin{array}{|c|c|c|}
    \hline  c & v  & t \\ \hline
  \hline  0 & 1  & 0 \\ \hline
  0 & 2  & 0 \\ \hline
 \vdots & \vdots  & \vdots \\ \hline
   0 & n & 0 \\ \hline
  \end{array} 
& 
  Z:
  \begin{array}{|c|c|c|}
    \hline  c & v  & t \\ \hline
    1 & 0  & 1 \\ \hline
     2 & 0  & 1 \\ \hline
    \vdots & \vdots  & \vdots \\ \hline
    m & 0 & 1 \\ \hline
    m+1 & 0 & 1 \\ \hline
      \vdots & \vdots  & \vdots \\ \hline
      m+n & 0 & 1 \\ \hline   
  \end{array} 
  \end{array}
  \]

Variable $t$ encodes the type of the object in consideration: $0$ for a clause, $1$ for a Boolean variable. The first $n$ assignments deal with variables (hence the value of $c$ is set to $0$, by convention), the last $m$ assignments deal with clauses (hence, $v$ is set to $0$). It now remains to show that 

\[ \phi \mbox{ is satisfiable } \iff  \mA \models_X \psi(t,c,v),\]
where  $\psi(t,c,v)$ is the formula $\exists x \ (t \perp x \wedge R(c,v,x))$.
Suppose first that $\phi$ is satisfiable and let $I:V\rightarrow \{0,1\}$ be such that $I(\phi)=1$. Let $F:X\rightarrow D$ be such that: 
%\arnaudmargin{The argumentation needs to be cleaned. I am rather convinced also a binary relation can be used instead of a ternary one}
\begin{enumerate}
\item\label{item1-exists-indep} if $s(c)=0$ and $s(v)=i$,  then $F(s)=v_i$ if $I(v_i)=1$ and $F(s)=\neg v_i$ if $I(v_i)=0$. Similarly, if $s(c)=m+i$, $i\geq 1$, then  $F(s)=v_i$ if $I(v_i)=1$ and $F(s)=\neg v_i$ if $I(v_i)=0$.

\item\label{item2-exists-indep} if $s(c)=i$, $1\leq i\leq m$, then $F(s)=l_{i_j}$, for $j\leq 3$, such that $I(l_{i_j})=1$. Such a $j$ always exists since $I\models \phi$.
\end{enumerate}

Let $X'=X(F/x)$. It is clear that for all $s\in X'$, $\mA \models_s R(c,v,x)$ holds by the construction. In $X'$, variable $t$ takes only two values $0$ and $1$. Let now

\[ 
\begin{array}{l}
V_0=\{s(x): s\in X' \mbox{ and } s(t)=0  \}, \\
 V_1=\{s(x): s\in X' \mbox{ and } s(t)=1  \}.
\end{array}
\]

Now to  show the claim  $\mA \models_X \psi(t,c,v)$, it remains to  show $\mA\models _{X'} t \perp x$. For this it suffices to show that $V_0=V_1$. Note that, for all $i\leq n$, either $v_i$ or $\neg v_i$ belongs to $V_0$. Also, by the construction, $V_0\subseteq V_1$. Indeed, by item~(\ref{item1-exists-indep}), for any $s$ with $s(t)=s(c)=0$ and  $s(v)=i$ it holds that $s(x)=s'(x)$, where $s'(c)=m+i$ and $s'(t)=1$.
Suppose now that there exists $s\in X'$ such that $s(x)\in V_1$ and $s(x)\not\in V_0$. Clearly, for such an $s$,  $s(c)=i$, for some $1\leq i\leq m$. Then, by the construction of the function $F$, $s(x)=l_{i_j}$, for $j\leq 3$, such that $I(l_{i_j})=1$. But then again by the definition of $F$, $s(x)\in V_0$ which is a contradiction. Therefore, $V_0=V_1$, $\mA\models _{X'} t \perp x$, and hence  $\mA \models_X \psi(t,c,v)$.

We now prove the other implication.
%\juha{I looks to me that lax semantics of the existential quantifier might cause us problems with this direction: Let $f(s)$ be the set  $\{v_i,\neg v_i\}$ for those assignments $s$ such that $s(c)=0, s(v)=i$ or $s'(c)=m+i, s'(v)=0$. Then the independence atom $t \perp x$ will be satisfied regardless of what values $f$ assigns to assignments with $s(c)=i\in \{1,....,m\}$.}
Suppose  $\mA\models_X\psi(t,c,v)$ and let $X'=X(F/x)$ be such that

 \[ \mA\models_{X'}  t \perp x \wedge R(c,v,x). \]
 
 Because $\mA\models_{X'} t \perp x$ and $s(t)\in \{0,1\}$ for all $s\in X'$, it holds that $V_0=V_1$, where $V_0$ and $V_1$ are as defined in the first part of the proof above. Together with the definition of the relation $R$, this implies that if $s,s'\in X$ such that $s(c)=0, s(v)=i$ and $s'(c)=m+i, s'(v)=0$, then $s(x)=s'(x)\in \{v_i,\neg v_i\}$. 
 Define now the Boolean assignment $I:V\rightarrow \{0,1\}$ by
 
 \[ I(v_i)=1 \iff s(x)=v_i \mbox{ for } s \mbox{ s.t. } s(c)=0, s(v)=i.\]

Now consider $s\in X'$ with $s(c)=i\in \{1,...,m\}$. By definition of $R$, $s(x)=l_{i_j}$ (for some $j\leq 3$). Since $V_1\subseteq V_0$ there must exists $s'\in X'$ s.t. $s'(t)=0$, $s'(x)=l_{i_j}$ and, also immediately by the construction of $X$, $s'(c)=0$ and $s'(v)=i$. The definition of $I$ implies that $I(l_{i_j})=1$ and that clause $C_i$ is satisfied.
\end{proof}

We end this section by noting that existential quantifiers cannot be replaced by universal quantifiers in the above theorems.

\begin{proposition}\label{forall+wedge} The model-checking problem for formulas of  dependence or independence  logic using only  universal quantification and conjunction is in $\logspace$.
% Let $\phi$  be a formula of  dependence or independence  logic using only  universal quantification and conjunction. Then the  model-checking problem of $\phi$  is in L.
\end{proposition}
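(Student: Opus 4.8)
The plan is to show that for a formula $\varphi$ built only from atoms (first-order literals, dependence atoms, or independence atoms), conjunction, and universal quantification, the truth value $\mA \models_X \varphi$ can be decided in logarithmic space. The key structural observation is that universal quantification and conjunction are the only operations in $\varphi$, and neither of them involves any existential choice: the clause for $\forall x$ replaces $X$ by the uniquely determined team $X(A/x)$, and the clause for $\wedge$ simply requires both conjuncts to hold on the same team $X$. In particular there is no disjunction, so there is never any need to guess a splitting of the team. Hence the satisfaction of $\varphi$ on $(\mA, X)$ unfolds deterministically, and the entire problem reduces to checking, for each atomic subformula $\alpha$ of $\varphi$, whether $\mA \models_{X_\alpha} \alpha$, where $X_\alpha$ is the team obtained from $X$ by the (fixed, finite) sequence of universal quantifications governing $\alpha$.

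The steps I would carry out are as follows. First, push the semantics inward: by an easy induction on the structure of $\varphi$ (using the clauses for $\wedge$ and $\forall$ from Definition~2 together with Locality), show that $\mA \models_X \varphi$ if and only if $\mA \models_{X(A^{k_\alpha}/\tu u_\alpha)} \alpha$ holds for every atomic subformula $\alpha$ of $\varphi$, where $\tu u_\alpha$ is the tuple of universally quantified variables whose scope contains $\alpha$ and $A^{k_\alpha}$ denotes the obvious product. Since $\varphi$ is fixed, there are only finitely many such $\alpha$, and each $k_\alpha$ is a fixed constant. Second, observe that each team $X(A^{k_\alpha}/\tu u_\alpha)$ has size polynomial in $\|\mA\| + \|X\|$ (it is $|X|\cdot |A|^{k_\alpha}$), and membership of a given assignment in it is decidable in logarithmic space — one only needs to read off the $X$-part and verify the remaining coordinates range over $A$. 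Third, check that each atomic condition $\mA \models_Y \alpha$ is itself Logspace-decidable when $Y$ is presented this way: for a first-order literal this is immediate; for a dependence atom $\dep(\tu x, y)$ one must verify that no two assignments of $Y$ agree on $\tu x$ and disagree on $y$, a first-order (indeed $\Pi_1$) property of the relation $Y(\tu x y)$; and for an independence atom $\indep{\tu x}{\tu y}{\tu z}$ the defining condition is likewise first-order over $Y$, hence Logspace. Finally, combine these: the whole check is a fixed Boolean combination (a conjunction) of Logspace-decidable predicates, so it is in Logspace.

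The one point requiring a little care — and the closest thing to an obstacle — is the bookkeeping with free variables and Locality: after a universal quantifier $\forall u$, the variable $u$ enters $\Dom$ of the team, and one must make sure that the atoms actually see the right coordinates and that restricting the expanded team to $\Fr(\alpha)$ (via Locality) does not change satisfaction. This is entirely routine once one fixes notation, because $\varphi$ is quantifier-free apart from the $\forall$'s and contains no $\exists$, so the domain of the team grows monotonically and predictably as one descends into $\varphi$; there is no interaction to track beyond recording which universally quantified variables are in scope. I would therefore present the induction cleanly by carrying along the team $X$ as a parameter and stating the claim as: for every subformula $\psi$ of $\varphi$ and every team $Y$ with $\Fr(\psi) \subseteq \Dom(Y)$, the relation $\mA \models_Y \psi$ is decidable in space $O(\log(\|\mA\|+\|Y\|))$. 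The base case is the atomic analysis above; the $\wedge$ step runs the two sub-procedures and takes their conjunction; the $\forall$ step runs the sub-procedure on the enlarged team $Y(A/x)$, whose assignments can be enumerated on the fly without writing them all down, keeping the space logarithmic.
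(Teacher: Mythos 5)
Your proof is correct and follows essentially the same route as the paper: both exploit that $\forall$ and $\wedge$ involve no nondeterministic splitting of the team, so the problem reduces to checking each atom on a team that is first-order definable (hence Logspace-computable) from $X$, and each atomic check is itself a first-order property of the team relation. The only cosmetic difference is that the paper first passes to prenex normal form and packages the whole check as a single FO sentence over $(\mA, X(\tu y))$, whereas you distribute the universal quantifiers down to the atoms by structural induction; the two presentations are interchangeable.
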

%\begin{proof}[Sketch] It is easy to show that $\phi$ can be expressed in FO assuming  the team $X$ with domain $\tu x$ is represented by the $n$-ary relation $X(\tu x)$. 
%\end{proof}

\begin{proof} %By the assumption, $\phi$ may only contain universal quantifiers and conjunction. 
Given $\phi$, we first transform it into prenex normal-form exactly as in first-order logic \cite{vaananen07}. We may hence assume that $\phi $ has the form
\[  \forall x_1\ldots \forall x_n \bigwedge \theta_i (x_1,\ldots, x_n,y_1,\ldots,y_m),        \]
where $\theta _i$ is either a first-order, dependence, or independence atom. Let  $\mA$ be a model,  and  $X$ be a team of $A$ with domain $\{x_1,\ldots,x_n,y_1,\ldots,y_m\} $. As in   \cite{vaananen07}, the formula $\bigwedge \theta_i (x_1,\ldots,x_n,y_1,\ldots,y_m)$ can be expressed by a first-order sentence $\psi$ when the team $X$ is represented by the $n+m$-ary relation $X(\tu x,\tu y)$, that is,
$$ \mA\models _X \bigwedge \theta_i (x_1,\ldots,x_n,y_1,\ldots,y_m) \Leftrightarrow (\mA,X(\tu x,\tu y))\models \psi. $$
Since $X(\tu x,\tu y)$ is a first-order definable extension of  $X(\tu y)$ it is clear that we can construct a FO-sentence $\psi'$ such that 
$$ \mA\models _X  \forall \tu x \bigwedge \theta_i (\tu x,\tu y) \Leftrightarrow (\mA,X(\tu y))\models \psi', $$
holds for all structures $\mA$ and teams $X$ with domain $\{y_1,\ldots,y_m\}$. The claim follows from the fact that the data complexity of FO is in $\mathrm{L}$.
\end{proof}

\section{Inclusion logic under the lax semantics}

Recall that a \pb{cnf} formula $\Phi$ is called dual-Horn if each of its clauses contains at most one negative literal. The satisfiability problem of dual-Horn \pb{cnf} formulas, \pb{dual-horn-sat}, is known to be $\Ptime$-complete (see~\cite{GareyJ1979}).  

 In this section we show that the model-checking problem of inclusion logic formulas under the lax semantics  can be reduced to \pb{dual-horn-sat}.
  
For a team $X$, $\tu x= \langle x_{i_1},...,x_{i_n}\rangle \in \dom(X)^n$, and $s\in X$, we denote by $\restrict{s}{\tu x}$ the restriction of $s$ to the  variables $x_{i_1},...,x_{i_n}$. In this section, $\sg$ denotes a relational signature.

  \begin{proposition}\label{FO inclusion lax}
  There exists an algorithm which, given 
   $\phi\in\FOinclusion$,  a structure $\mA$ over $\sigma$, and  a team $X$ such that $\Var({\phi})\subseteq \dom(X)$, outputs a propositional formula $\Psi$ in dual-Horn form such that: $\mA \models_X \phi \iff  \Psi$ is satisfiable.
   %   \[ \mA \models_X \phi \mbox{ iff } \Psi \mbox{ is satisfiable. } \] 
   Furthermore, when $\phi$ is fixed, the algorithm runs in logarithmic space in the size of $\mA$ and $X$.
  \end{proposition}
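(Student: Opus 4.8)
The plan is to build the dual-Horn formula $\Psi$ by induction on the structure of $\varphi$, maintaining at each subformula a set of propositional variables that encode ``which assignments of which team end up satisfying this subformula''. Concretely, for each subformula $\psi$ of $\varphi$ and each assignment $s$ over $\mathrm{dom}(X)\cup(\text{variables quantified above }\psi)$ with values in $A$, I introduce a propositional variable $p_{\psi,s}$ intended to mean ``$s$ belongs to the team that is fed to $\psi$ in the evaluation''. The root gets one clause forcing $p_{\varphi,s}$ true for every $s\in X$, and the satisfiability of $\Psi$ should correspond exactly to the existence of a successful evaluation tree. The key point to check throughout is that every clause we generate has at most one negative literal.

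First I would handle the literals: for a first-order literal $\alpha$, add for each $s$ the clause $\neg p_{\alpha,s}$ whenever $\mA\not\models_s\alpha$ (a one-literal, hence dual-Horn, clause); these are the only ``failure'' clauses and they are clearly computable in logspace since evaluating a fixed first-order literal is in logspace. For conjunction $\psi\wedge\theta$, propagate $p_{\psi\wedge\theta,s}\to p_{\psi,s}$ and $p_{\psi\wedge\theta,s}\to p_{\theta,s}$, i.e. clauses $\neg p_{\psi\wedge\theta,s}\vee p_{\psi,s}$ — one negative literal, fine. For disjunction $\psi\vee\theta$, the lax rule needs $Y\cup Z=X$, so I add $p_{\psi\vee\theta,s}\to (p_{\psi,s}\vee p_{\theta,s})$, the clause $\neg p_{\psi\vee\theta,s}\vee p_{\psi,s}\vee p_{\theta,s}$, again exactly one negative literal. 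Universal quantification $\forall x\,\psi$ duplicates each $s$ over all $m\in A$ and propagates $p_{\forall x\psi,s}\to p_{\psi,s(m/x)}$. Existential quantification $\exists x\,\psi$ is the delicate non-downward-closed case: the lax rule allows a whole nonempty set $F(s)\subseteq A$, so I introduce fresh variables $q_{\psi,s,m}$ (``$m\in F(s)$''), add $\neg p_{\exists x\psi,s}\vee\bigvee_{m\in A}q_{\psi,s,m}$ to force $F(s)$ nonempty (one negative literal), and add $\neg q_{\psi,s,m}\vee p_{\psi,s(m/x)}$ to push the chosen successors into $\psi$'s team.

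The genuinely new ingredient, and the main obstacle, is the inclusion atom $\tu x\subseteq\tu y$: its semantics says that for every $s$ in the current team there is some $s'$ in the current team with $s'(\tu y)=s(\tu x)$. Since the ``current team'' here is itself determined by the $p$-variables, I cannot simply read it off; I need a clause of the form ``if $p_{\tu x\subseteq\tu y,s}$ then $\bigvee_{s'\,:\,s'(\tu y)=s(\tu x)} p_{\tu x\subseteq\tu y,s'}$''. That clause has one negative literal and arbitrarily many positive ones, which is precisely dual-Horn — this is why dual-Horn rather than Horn is the right target, mirroring the disjunction clause. The subtlety to verify is that these ``$p$'' variables for an inclusion atom play a double role (team membership and witness availability) and that this is consistent; but because inclusion atoms are not downward closed in a way that matters here, the natural reading works: an assignment labelled true at the atom is simultaneously a member of the satisfying subteam and available as a witness.

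Finally I would prove the equivalence $\mA\models_X\varphi\iff\Psi$ satisfiable in both directions by induction. For $(\Rightarrow)$, given a successful lax evaluation, set $p_{\psi,s}=1$ exactly when $s$ lies in the team reaching $\psi$, and $q$-variables according to the chosen Skolem functions $F$; routine checking of each clause type shows $\Psi$ is satisfied. For $(\Leftarrow)$, from a satisfying assignment $I$ define, for each subformula, the team $T_\psi=\{s: I(p_{\psi,s})=1\}$, verify by induction that $\mA\models_{T_\psi}\psi$ (using the clauses: failure clauses guarantee literals hold; the existential clauses give a legitimate nonempty $F$; the inclusion clauses give exactly the witness condition on $T$), and conclude $\mA\models_X\varphi$ since $T_\varphi\supseteq X$ and dependence/inclusion logic is local so we may restrict to $X$. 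The logspace bound is immediate: $\varphi$ is fixed, so the number of subformulas and quantified variables is a constant, the total number of propositional variables is polynomial in $|A|$ and $|X|$, and each clause is produced by evaluating a fixed first-order condition, all doable in logarithmic space.
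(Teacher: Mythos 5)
Your overall architecture --- one propositional variable per (subformula occurrence, assignment) pair, with dual-Horn clauses mirroring the lax semantic rules, the inclusion atom and the disjunction being the sources of the clauses with many positive literals --- is exactly the paper's construction. But your version is unsound as stated, because you constrain the teams from only one side. You force $p_{\varphi,s}$ to be true for $s\in X$ but never false for $s\notin X$, you let the teams of the two disjuncts (and of the conjuncts, and of the quantified subformula) float free of the parent team, and at the end you pass from $\mA\models_{T_\varphi}\varphi$ with $T_\varphi\supseteq X$ down to $\mA\models_X\varphi$ by invoking locality. Locality does not do that: it lets you restrict the \emph{domain} of the assignments to $\Fr(\varphi)$, not the team to a subteam. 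Inclusion logic is not downward closed, and that is precisely the failure mode: a satisfying assignment of $\Psi$ can pad a team with assignments outside $X$ that then serve as spurious witnesses for inclusion atoms. Concretely, take $A=\{0,1\}$, $\varphi$ the atom $x\subseteq y$, and $X=\{s_0\}$ with $s_0(x)=0$, $s_0(y)=1$. Then $\mA\not\models_X\varphi$, but your $\Psi$ consists only of the unit clause $p_{\varphi,s_0}$ and the witness clauses $p_{\varphi,s}\to\bigvee_{s':\,s'(y)=s(x)}p_{\varphi,s'}$, and it is satisfied by making $p_{\varphi,s_0}$ and $p_{\varphi,(0,0)}$ true. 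The same padding defeats your disjunction rule: in $(x\subseteq y)\vee(x\neq x)$ over this $X$ nothing prevents the left subteam from acquiring members that the parent team does not have, again making $\Psi$ satisfiable while the formula is false.

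The repair is what the paper does: add the unit clauses $\neg X[s]$ for every $s\notin X$ at the root, and at each disjunction add the reverse implications $Y[s]\to X[s]$ and $Z[s]\to X[s]$ (each a two-literal clause with one negative literal, hence still dual-Horn); for conjunction, reuse the parent's team variables for both conjuncts rather than introducing fresh ones with only forward implications. Analogous containment clauses are needed wherever fresh team variables are introduced for $\exists$ and $\forall$ (e.g.\ $Y[(s,a)]\to X[s]$), since otherwise the child team can contain extensions of assignments absent from the parent team and the same spurious-witness phenomenon recurs --- a point on which the paper's own write-up is itself terse, but which your argument must address explicitly because your $(\Leftarrow)$ direction reads the teams off literally as $T_\psi=\{s:I(p_{\psi,s})=1\}$. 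Once all containments are enforced, your induction goes through and the logspace claim is fine.
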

  
  \begin{proof}
%{read r from X throughout the proof}  
Let $\phi, \mA, X$ be as above. For any team $X$, we  will consider the set $\mX$ of propositional variables $X[s]$ for $s\in A^{\dom(X)}$.
  % and the valuation $I:\mX\longrightarrow \{0,1\}$ defined by:
%\[ I(X[s])=1 \iff s\in X. \]
Starting from $\phi$, $\mA$, and $X$ we decompose step by step the formula $\phi$ into subformulas (until reaching its atomic subformulas) and different teams $Y$, $Z$, ... and control the relationships between the different teams by propositional dual-Horn formulas built over the propositional variables issued from $X,Y,Z,...$..
Let $\calS=\{(\phi,X,V)\}$, where $V=\dom(X)$, and $\calC=\{X[s] : s\in X\} \cup \{\neg X[s] : s\not\in X\}$. The propositional formula $\Psi$ is now constructed inductively as follows.
  
  As long as $\calS\neq \emptyset$, we apply the following rule:  Pick $(\phi,X,V)$ in $\calS$ and apply the following rules.
  
  \begin{itemize}
  	\item If $\phi$ is  $R(\tu x)$ with $R$ a literal of $\sg$ then: $\calS:=\calS\backslash \{(\phi,X,V)\}$  and 
  	$\calC:=\calC \cup \{X[s]\rightarrow \top : s \in A^V\mbox{ and } \mA\models_X R(\tu x) \}\cup \{X[s]\rightarrow \bot : s \in A^V\mbox{ and } \mA\not\models_X R(\tu x) \}$.
  	%  \[\calS:=\calS\backslash \{(\phi,X,r)\} \mbox{ and } \calC:=\calC \cup \{X[s]\rightarrow R(\restrict{s}{\tu x}): \mbox{ for all } s\in A^r \}
%  \]
%  	
%  		\begin{eqnarray*}
%  	&  	\calS:=\calS\backslash \{(\phi,X,r)\} \mbox{ and }\\
%  	&  	\calC:=\calC \cup \{X[s]\rightarrow R(\restrict{s}{\tu x}): \mbox{ for all } s\in A^r \}
%  	  	\end{eqnarray*}
%  	
%  	\[ \calS:=\calS\backslash \{(\phi,X,r)\} \mbox{ and }\] 
%  	
%  	\[\calC:=\calC \cup \{X[s]\rightarrow R(\restrict{s}{\tu x}): \mbox{ for all } s\in A^r \}  \]
% \juha{$X[s]\rightarrow R(\restrict{s}{\tu x})$ should be reformulated and also the definition  of $\calC$ above.} 	
% \arnaud{I have reformulated the definition of $\calC$ but maybe $X[s]\rightarrow R(\restrict{s}{\tu x})$ is clearer than $\bigcup_{s(\tu x)\not\in R}\neg X[s]$}
Clearly, it holds that $\mA\models_XR(\tu x)$ iff $ \bigwedge\calC$ is satisfiable.

  	\item If $\phi$ is  $\tu x \subseteq \tu y$ then: $\calS:= \calS\backslash \{(\phi,X,V)\}$  and  
  	\[\textstyle
  		%\calS:= \calS\backslash \{(\phi,X,r)\}  \mbox{ and } 
  		\calC:=\calC \cup \{X[s]\rightarrow \bigvee_{s'\in A^V, \restrict{s'}{\tu y}=\restrict{s}{\tu x}} X[s']:
  	 s\in A^V \}. \]
  	 %  	  	 		\begin{eqnarray*}
%  	  	  	&  	\calS:= \calS\backslash \{(\phi,X,r)\}  \mbox{ and } \\
%  	  	  	&  	 \calC:=\calC \cup \{X[s]\rightarrow \bigvee_{s'\in A^r, \restrict{s'}{\tu y}=\restrict{s}{\tu x}} X[s']: %\mbox{ for all }
%  	  	  	 s\in A^r \} 
%  	  	  	  	\end{eqnarray*}
%  	
%  	\[ \calS:= \calS\backslash \{(\phi,X,r)\}  \mbox{ and } \calC:=\calC \cup \{X[s]\rightarrow \bigvee_{s'\in A^r, \restrict{s'}{\tu y}=\restrict{s}{\tu x}} X[s']: \mbox{ for all } s\in A^r \}  \]
%  	
  	It holds that $\mA\models_X\tu x \subseteq \tu y$ iff $ \bigwedge_{s\in A^V} (X[s]\rightarrow \bigvee_{s'\in A^V, \restrict{s'}{\tu y}=\restrict{s}{\tu x}} X[s'])$ is satisfiable.

  	\item If $\phi$ is  $\exists x \psi$, then: $\calS := ( \calS \backslash \{(\phi,X,V)\} ) \cup \{(\psi,Y, V\cup\{x\})\}$ and 
  	\[
  		\textstyle
%  	 	\calS := ( \calS \backslash \{(\phi,X,r)\} ) \cup \{(\psi,Y, r+1)\}  \mbox{ and } 
  	    	\calC:=\calC \cup \{X[s]\rightarrow \bigvee_{
  	  	  	a\in A} Y[s(a/x)]: s\in A^V \}, 
  	  	\]
  	%  	  		\begin{eqnarray*}
%  	  	  	&  	\calS := ( \calS \backslash \{(\phi,X,r)\} ) \cup \{(\psi,Y, r+1)\}  \mbox{ and } \\
%  	  	  	&  	\calC:=\calC \cup \{X[s]\rightarrow \bigvee_{
%  	  	  	  	%s'\in A^{r+1}, \restrict{s'}{\tu x}=\restrict{s}{\tu x}
%  	  	  	  	s'=(s,a),\  a\in A} Y[s']: s\in A^r \} 
%  	  	  	  	\end{eqnarray*}
  	  	%  	\[ \calS := ( \calS \backslash \{(\phi,X,r)\} ) \cup \{(\psi,Y, r+1)\} \]
%  	
%  	
%  	\[ \calC:=\calC \cup \{X[s]\rightarrow \bigvee_{
%  	%s'\in A^{r+1}, \restrict{s'}{\tu x}=\restrict{s}{\tu x}
%  	s'=(s,a),\  a\in A} Y[s']: \mbox{ for all } s\in A^r \}   \]
%  	
\noindent where the $Y[s]$, $s\in A^{V\cup\{x\}}$ are new  propositional variables (not used in $\calC$).
If $\mA\models_X\exists x \psi$ then, there exists a function $F\colon X\rightarrow \mathcal{P}(A)\setminus \{\emptyset\}$, such that $\mA\models_{X(F/x)} \psi$. In other words,  $\mA\models_Y\psi$ for some team $Y$ defined by the solutions  of the constraint $\bigwedge_{s\in A^V} X[s]\rightarrow \bigvee_{a\in A} Y[s(a/x)]$ (which define a suitable function $F$). Conversely, if $\mA\models_Y\psi$ for a team $Y$ as above defined from $X$, then clearly $\mA\models_X\exists x \psi$.

  	\item If $\phi$ is   $\forall x \psi$, then: $\calS := ( \calS \backslash \{(\phi,X,V)\} ) \cup \{(\psi,Y, V\cup\{x\})\} $ and 
  	  		\[
 \begin{array}{l}
%  \calS := ( \calS \backslash \{(\phi,X,r)\} ) \cup \{(\psi,Y, r+1)\}   \mbox{ and } \\
  \calC :=  \calC \cup  \{X[s]\rightarrow Y[s']:  s\in A^V, s'\in A^{V\cup\{x\}} \mbox{ s.t. } \restrict{s'}{\tu x}=\restrict{s}{\tu x} \},  
 \end{array}
  	  	\]
  	  	%  	  	\[ \calS := ( \calS \backslash \{(\phi,X,r)\} ) \cup \{(\psi,Y, r+1)\} \]
%  	  	
%  	
%  	 
%  	  	\[ \calC:=\calC \cup \{X[s]\rightarrow Y[s']: \mbox{ for all } s\in A^r, s'\in A^{r+1} \mbox{ s.t. } \restrict{s'}{\tu x}=\restrict{s}{\tu x} \}   \]
  	  	
  	  	 \noindent where the $Y[s]$, $s\in A^{V\cup\{x\}}$ are new  propositional variables (not used in $\calC$). The conclusion is similar as for the preceding case.
  	  	 
  	\item  If $\phi$ is   $\psi_1 \wedge \psi_2$ then: $\calS := ( \calS \backslash \{(\phi,X,V)\} ) \cup \{(\psi_1,X, V), (\psi_2,X, V)\}$
  	and $\calC$ is unchanged. By definition, $\mA\models_X\phi$ iff $\mA\models_X\psi_1 \wedge \psi_2$.
  	 \item 	If $\phi$ is   $\psi_1 \vee \psi_2$ then: $\calS := ( \calS \backslash \{(\phi,X,V)\} ) \cup \{(\psi_1,Y, V), (\psi_2,Z, V)\} $ and
  	 %let $\mY=\{Y[s]: s\in A^{r}\}$, $\mZ=\{Z[s]: s\in A^{r}\}$ and:
  	 	$$\textstyle
  	 % \begin{array}{l}
  	   %\calS := ( \calS \backslash \{(\phi,X,r)\} ) \cup \{(\psi_1,Y, r), (\psi_2,Z, r)\}   \mbox{ and } \\
  	   \calC :=  \calC \cup 
 \{X[s]\rightarrow Y[s]\vee Z[s]: s\in A^V \}\cup 
 \{Y[s]\rightarrow X[s],  Z[s]\rightarrow X[s]: s\in A^V \} 
  	  %\end{array}
  	   	  	$$
  	   	  	%  	 \[ \calS := ( \calS \backslash \{(\phi,X,r)\} ) \cup \{(\psi_1,Y, r), (\psi_2,Z, r)\}\]
%  
%  and 
%  
%\[   \calC:=\calC \cup \{X[s]\rightarrow Y[s]\vee Z[s], Y[s]\rightarrow X[s],  Z[s]\rightarrow X[s]: \mbox{ for all } s\in A^r \} 
% \]  
   \noindent where again the $Y[s]$ and $Z[s]$, $s\in A^{V}$ are  new propositional variables  (not used in $\calC$). Here again, $\mA\models_X\phi$ if and only if $\mA\models_Y\psi_1$ and $\mA\models_Z\psi_2$ for some suitable $Y$ and $Z$ such that $Y\cup Z=X$ which is exactly what is stated by the Boolean constraints.
  \end{itemize}
  
  Observe that each new clause added to $\calC$ during the process is of dual-Horn form, i.e., contains at most one negative literal. Observe also, that applied to some $(\phi,X,r)$, the algorithm above only adds triples  $\calS$ whose first component is a proper subformula of $\phi$ and eliminates  $(\phi,X,r)$. When the formula $\phi$ is atomic, no new triple is added afterwards.  Hence the algorithm will eventually terminate with $\calS=\emptyset$. 
 Setting $\Psi:= \bigwedge_{C\in\calC} C$, it can easily be proved by induction that: $\mA \models_X \phi$  iff  $\Psi$ is satisfiable.

   %\[ \mA \models_X \phi \mbox{ iff } \Psi \mbox{ is satisfiable. } \]  

Observe also that each clause in $\calC$ can be constructed from $X$ and $\mA$ by simply running through their elements (using their index) hence  in logarithmic space. 
  \end{proof}
  
 \begin{remark} 
 %Remark that t
 The construction of Proposition~\ref{FO inclusion lax} can  be done in principle for any kind of atom: dependence, independence, exclusion, constancy etc. 
 %But the resulting formula seems to be obviously in Dual-Horn form only in the case of inclusion. 
  To illustrate this remark, one could translate in the above proof a dependence atom of the form $\dep(\tu x,y)$ by (using the notations of the proof):
 
\[\bigwedge_{ \substack{s,s'\in A^r\\ s(\tu x)=s'(\tu x) \wedge s(y)\neq s'(y)}} (\neg X[s]\vee \neg X[s']).\]
 
 The additional clauses are of length two. A similar treatment can be done for independence atoms $\indep{\tu x}{\tu y}{\tu z}$.  In the two cases however, the resulting formula is not in Dual-Horn form anymore and there is no way to do so (unless $\Ptime=\np$).
 \end{remark}

  Since deciding the satisfiability of a propositional formula in dual-Horn form can be done in polynomial time we obtain the following already known (\cite{gallhella13}) corollary.
  
\begin{corollary}
  The data complexity of $\FOinclusion$ under the lax  semantics  is in $\Ptime$.
  \end{corollary}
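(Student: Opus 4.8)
The plan is to compose the translation of Proposition~\ref{FO inclusion lax} with the polynomial-time decision procedure for dual-Horn satisfiability. First I would recall the relevant fact about dual-Horn formulas: a propositional formula in conjunctive normal form is dual-Horn precisely when, after flipping the polarity of every literal, it becomes a Horn formula. Since Horn satisfiability is decidable in linear time by unit propagation, dual-Horn satisfiability is decidable in polynomial (indeed linear) time as well.

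Now fix $\varphi \in \FOinclusion$. Given an input structure $\mA$ over $\sigma$ and a team $X$ with $\var{\varphi} \subseteq \dom(X)$, apply the algorithm of Proposition~\ref{FO inclusion lax} to produce the dual-Horn propositional formula $\Psi$. By that proposition, $\Psi$ is computed in logarithmic space in $|\mA| + |X|$; in particular $\Psi$ has size polynomial in $|\mA| + |X|$ and is produced in polynomial time, and $\mA \models_X \varphi$ holds if and only if $\Psi$ is satisfiable. Running the polynomial-time dual-Horn satisfiability test on $\Psi$ therefore decides $\mA \models_X \varphi$ in time polynomial in $|\mA| + |X|$, which is exactly the statement that the data complexity of model-checking for $\FOinclusion$ is in $\Ptime$.

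There is essentially no obstacle here: the entire content lies in Proposition~\ref{FO inclusion lax}, and the corollary is a routine composition. The only bookkeeping worth noting is that a logspace transducer produces an output of polynomial size and can be simulated in polynomial time, and that the number of propositional variables introduced remains polynomial in $|A|$ — the variables $X[s], Y[s], \dots$ range over $A^{r}$ with $r$ bounded by the number of variables occurring in $\varphi$, which is a constant once $\varphi$ is fixed. Both observations are immediate, so the result follows.
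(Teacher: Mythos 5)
Your proof is correct and follows exactly the paper's route: compose the dual-Horn translation of Proposition~\ref{FO inclusion lax} with the polynomial-time satisfiability test for dual-Horn formulas, noting that the logspace-computable output has polynomial size. The paper states this in a single sentence; your extra bookkeeping about the number of propositional variables is accurate but not needed beyond what the proposition already guarantees.
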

  
  %The $\FOinclusion$ logic  is in fact known to be $\Ptime$-complete since it can express alternating graph reachability.

\section{Inclusion logic under the strict semantics}
In this  section we consider model-checking of inclusion logic formulas  under the strict semantics. % The first results is a strong version of Theorem \ref{Exists+wedge}  for pure independence atoms,  and the second result is the analogue of Theorem \ref{Exists+wedge} for inclusion logic.
 By the result of \cite{galhankon13}, inclusion logic with the strict semantics is equi-expressive with dependence logic. The following theorem shows that NP-completeness can be attained with quite simple formulas as in Theorem \ref{Exists+wedge}  combining strict existential quantification and inclusions atoms.
\begin{proposition}\label{StrictInc1}
	Let $\psi(c,v)$ be the following formula over signature $\sigma = \{R\}$:
	$$ \psi(c,v) = \exists x \exists y (y \subseteq x \wedge R(c,v,x,y)).$$
	For all propositional formulas $\phi$ in $3$-\pb{cnf}, one can compute in polynomial time a team $X$ with domain $\{c,v\}$ and a structure $\mA$ such that: $\phi$  is satisfiable  $\Leftrightarrow \mA \models_X \psi(c,v)$ under the strict semantics.
	%$$\phi \text{ is satisfiable } \Leftrightarrow \mA \models_X \psi(c,v)$$
\end{proposition}

\begin{proof}
	Let $\phi = \bigwedge_{i=1}^mC_i$ be a $3$-\pb{cnf} formula over a set $V = \{v_1,\hdots,v_n\}$ of variables. Let $C_i = \ell_{i_1} \vee \ell_{i_2} \vee \ell_{i_3}$, with $\ell_{i_j} \in \{v_1,\hdots,v_n,\neg v_1,\hdots, \neg v_n\}$.
	
	The domain of the structure $\mA$ is $A = \{0,\hdots,n,v_1,\hdots,v_n,\neg v_1,\hdots, \neg v_n\}$. The relation $R$ in this structure is:
	
	\begin{align*}
	R = & \{(0,i,v_i,0), (0,i,\neg v_i,0) | 1 \leq 1 \leq n\} \cup \\
		&\{(i,0,0, \ell_{i_1}),(i,0,0, \ell_{i_2}),(i,0,0, \ell_{i_3}) | 1 \leq i \leq m\}
	\end{align*}
	
	Finally, the team $X$ isgiven by the following table: 
		\[X=
		\begin{array}{|c|c|}
    			\hline  
		c & v  \\ 
			\hline
  		0 & 1   \\ 
			\hline
 		\vdots & \vdots \\ 
			\hline
		0 & n \\
			\hline
		1 & 0 \\  
			\hline
 		\vdots & \vdots  \\ 
			\hline
		m & 0 \\
			\hline
  		\end{array} 
		\]
	
	Now we claim that $\phi$ is satisfiable, if and only if $ \mA \models_X \exists x \exists y (y \subseteq x \wedge R(c,v,x,y))$.
	
%	\begin{itemize}
%		\item[$\Leftarrow$] 
Let us suppose that $ \mA \models_X \exists x \exists y (y \subseteq x \wedge R(c,v,x,y))$. Then there is an extension $X'$ of $X$ to the variables $x$ and $y$ such that 
		$ \mA \models_{X'} x \subseteq y \wedge R(c,v,x,y)$. Note that $X'$ 
	 has, e.g.,  the following shape:
		\[X':
		\begin{array}{|c|c|c|c|}
    			\hline  
		c & v & x & y \\ 
			\hline
  		0 & 1 & \neg v_1 & 0  \\ 
			\hline
 		\vdots & \vdots &\vdots & \vdots \\ 
			\hline
  		0 & i & \neg v_i & 0   \\ 
			\hline
 		\vdots & \vdots &\vdots & \vdots \\ 
			\hline
		0 & n  & v_n & 0\\
			\hline
		1 & 0 & 0 & \ell_{1_2} \\  
			\hline
 		\vdots & \vdots &\vdots & \vdots  \\ 
			\hline
		i & 0 & 0 & \ell_{i_1} \\  
			\hline
 		\vdots & \vdots &\vdots & \vdots  \\ 
			\hline
		m & 0 & 0 & \ell_{m_2} \\
			\hline
  		\end{array} 
		\]

Note that for each row of $X$ the variables $x$ and $y$ get exactly one value in $X'$.

Let $I: V \rightarrow \{0,1\}$ be the following assignment  of the variables of $\phi$: let $s \in X$  such that $s(v) = i$; we set $I(v_i) = 1$ if $s(x) = v_i$, $I(v_i) = 0$ otherwise (i.e., if $s(x) = \neg v_i$). Because $\mA \models_{X'} R(c,v,x,y)$ such $s$ exists and $s(x) \in \{v_i, \neg v_i\}$. Furthermore, $s$ is the only assignment in $X'$ such that $s(v) = i$ so there is no ambiguity in the definition of $I$.
		
Now we have to check that for every  clause of $\phi$ there is a literal which is evaluated to $1$ by $I$. Let $C_i$ be a clause of $\phi$ and $s$ the element of $X'$ such that $s(c) = i$.  Since $s(y)$ is a literal of $C_i$ and  $\mA \models_{X'} y \subseteq x$, there exists $s' \in X'$ such that $s'(x) = s(y)$. But $I\left(s\left(x\right)\right) = 1$ by definition, thus a literal of $C_i$ is evaluated to $1$ by $I$ and hence $I$ satisfies $\phi$.
	
%\item[$\Rightarrow$] 
Suppose then that $\phi$ is satisfiable and let $I: V \rightarrow \{0,1\}$ an assignment of the variables of $\phi$ which satisfies $\phi$. We extend $X$ to a team $X'$ over variables $\{c,v,x,y\}$ as follows: for $1 \leq i \leq n$, if $s \in X$ is such that $s(v) = i$ we set $s(x) = v_i$ if $I(v_i) = 1$, and  $s(x) = \neg v_i$ otherwise. Furthermore, $s(y) = 0$.
		
		For $1 \leq i \leq m$, if $s \in X$ is such that $s(c) = i$, we set $s(x) = 0$ and $s(y) = \ell_{i_j}$ where $\ell_{i_j}$ is a literal of $C_i$ which is evaluated to $1$ by $I$. It is now easy to check that 
		$\mA \models_{X'} y\subseteq x \wedge R(c,v,x,y)$, and hence  $ \mA \models_X \exists x \exists y (y \subseteq x) \wedge R(c,v,x,y)$. 
		%$X'(x)$ is the set of literals built on $V$ and evaluate to $1$ by $I$. $X'(y)$ is a set of literals of clauses of $\phi$ evaluates to $1$ by $I$. Therefore  $X'(y) \subseteq X'(x)$. 
%	\end{itemize}
\end{proof}

The next proposition shows that $\np$-completeness can be also attained by combining strict disjunction with inclusions atoms.
\begin{proposition}\label{StrictInc2}
	There exists formulas $\phi_1,\phi_2,\phi_3$ built with $\subseteq, \wedge$ such that the model checking problem for $\phi_1 \vee \phi_2 \vee \phi_3$ under the strict semantics is NP-complete.
	
\end{proposition}

\begin{proof} Membership in $\np$ is obvious. For hardness, we exhibit a polynomial time reduction to $\pb{1-in-3-sat}$. 

Let $\psi$ be the following formula over variables $\{l,v,c,a,b,\one,\two,\three\}$:
\[	
	\begin{array}{rl}
	\psi  \equiv & (\ell \subseteq v \wedge b \subseteq c \wedge a \subseteq \one) \vee \\
	& ( b \subseteq c \wedge a \subseteq \two) \vee \\
	& ( b \subseteq c \wedge a \subseteq \three)) \\
	\end{array}
\]
%	$$\psi \equiv(\ell \subseteq v \wedge b \subseteq c \wedge a \subseteq \one) \vee(\ell \subseteq v \wedge ((b \subseteq c \wedge a \subseteq \two) \vee ( b \subseteq c \wedge a \subseteq \three)) $$
    
	We will show that for all   positive $3$-\pb{cnf} formulas $\phi$, one can compute in polynomial time a team $X$ and a structure $\mA$ such that:
	$$\phi \text{ is an instance of \pb{1-in-3-sat} } \Leftrightarrow \mA \models_X \psi(l,v,c,a,b,\one,\two,\three)$$
Let $\phi = \bigwedge_{i=1}^m C_i$ be a positive $3$-\pb{cnf} formula over a set $V = \{v_1, \hdots, v_n\}$ of variables. Let $C_i = l_{i_1} \vee l_{i_2} \vee l_{i_3}$.  Recall that $\phi$ is an instance of  the problem $1$-in-$3$-SAT if and only if there is a truth assignment such that each  clause of $\phi$ has exactly one true variable.
	
The domain of the structure $\mA$ is $D = \{v_1, \hdots, v_n, 0,\hdots, m\}$. The team $X$ is $Y \cup Z$, see Table~\ref{Y}.
\begin{table}
	\[\begin{array}{c}
	Y:
	\begin{array}{|c|c|c|c|c|c|c|c|c|}
    			\hline  
		\ell & v & c & a & b & \one & \two & \three \\ 
			\hline
			\hline
  		0 & v_1  & 0 & 0 & 0 & 0 & 0 & 0\\ 
			\hline
 		\vdots & \vdots & \vdots & \vdots & \vdots & \vdots & \vdots & \vdots \\ 
			\hline
		0 & v_n  &  0 & 0 & 0 & 0 & 0 & 0\\
			\hline
		\ell_{1_1} &  0 & 1 & 0 & 0 & 0 & 0 & 0\\
			\hline
  		\ell_{1_2} &  0 & 1 & 0 & 0 & 0 & 0 & 0\\
			\hline
  		\ell_{1_3} &  0 & 1 & 0 & 0 & 0 & 0 & 0\\
			\hline
		\vdots &  \vdots & \vdots & \vdots & \vdots & \vdots & \vdots & \vdots \\ 
			\hline
  		\ell_{i_1}  & 0 & i & 0 & 0 & 0 & 0 & 0\\
			\hline
  		\ell_{i_2}  & 0 & i & 0 & 0 & 0 & 0 & 0\\
			\hline
  		\ell_{i_3}  & 0& i & 0 & 0 & 0 & 0 & 0\\
			\hline
		\vdots  & \vdots & \vdots & \vdots & \vdots & \vdots & \vdots & \vdots \\ 
			\hline
  		\ell_{m_1}  & 0 & m & 0 & 0 & 0 & 0 & 0\\
			\hline
  		\ell_{m_2}  & 0 & m & 0 & 0 & 0 & 0 & 0\\
			\hline
  		\ell_{m_3}  & 0 & m & 0 & 0 & 0 & 0 & 0\\
			\hline
  	\end{array} 
  	\\
	\\
	Z:
	\begin{array}{|c|c|c|c|c|c|c|c|c|}
			\hline
		\ell & v & c & a & b & \one &\two&\three\\ 
			\hline
			\hline
    		0 & 0 &  0 & 1 & 1 & 1 & 2 & 3\\
			\hline
		\vdots &  \vdots & \vdots & \vdots & \vdots& \vdots & \vdots & \vdots \\ 
			\hline
		0 & 0 &  0 & 1 & m & 1 & 2 & 3\\
			\hline
		0 & 0  & 0 & 2 & 1 & 1 & 2 & 3\\
			\hline
		 \vdots & \vdots & \vdots & \vdots & \vdots &  \vdots & \vdots & \vdots \\ 
			\hline
		0 &  0 & 0 & 2 & m & 1 & 2 & 3\\
			\hline
		0 &  0 & 0 & 3 & 1 & 1 & 2 & 3\\
			\hline
		\vdots  & \vdots & \vdots & \vdots & \vdots &  \vdots & \vdots & \vdots\\ 
			\hline
		0 & 0 & 0 & 3 & m & 1 & 2 & 3\\
			\hline
  	\end{array} 

	\end{array}
		\]
\caption{\label{Y}}
\end{table}

	Now we claim that $\phi$ is an instance of \pb{1-in-3-sat}, if and only if 
	$ \mA \models_X \psi $.%((\ell \subseteq v \wedge b \subseteq c \wedge a \subseteq \one) \vee(\ell \subseteq v \wedge ((b \subseteq c \wedge a \subseteq \two) \vee ( b \subseteq c \wedge a \subseteq \three)). $$
	
%	\begin{itemize}
%		\item[$\Leftarrow$]
			Let us suppose that $ \mA \models_X \psi $, i.e., there exists a partition of $X$ into three subsets $X_1$, $X_2$ and $X_3$ such that 
			\begin{align*}
				\mA & \models_{X_1}  b \subseteq c \wedge a \subseteq \one,\\
				\mA & \models_{X_2} b \subseteq c \wedge a \subseteq \two,\\
				\mA & \models_{X_3} b \subseteq c \wedge a \subseteq \three,\\
				\mA & \models_{X_1} \ell \subseteq v.
			\end{align*}
			
			We will define  an assignment $I$ over $V$ witnessing that $\phi$ is an instance of \pb{1-in-3-sat}: for $1 \leq i \leq n$ there exists a unique $s \in X$ such that $ s(v)  = v_i$. If $s \in X_1$, we set $I(v_i) = 1$ and $I(v_i) = 0$ otherwise. As $s$ is unique and the sets  $X_i$ are disjoint (because we are in the strict semantics), $I$ is well defined.
			
			We have to check that for any $1 \leq i \leq m$,  exactly one of the  variables of clause $C_i$ is evaluated to $1$ by the assignment $I$.
			
		Now 	$\mA \models_{X_1} a \subseteq \one$, $\mA \models_{X_2} a \subseteq \two$ and $\mA \models_{X_3} a \subseteq \three$ imply that every assignment $s \in X$ such that $s(a) = 1$ must be in $X_1$. Therefore $X_1(b) = \{0,1,\hdots, m\}$.	Similarly $X_2(b) = X_3(b) = \{0,1,\hdots, m\}$.

			The variable  $c$  stores the index of a clause. Because $\mA \models_{X_1} b \subseteq c$, every clause $C_i$ has an assignment $s \in X_1$ such that $ s(c)  = i$, and the same holds for the sets   $X_2$ and in $X_3$. %Similarly there is in $X_2$ and in $X_3$ an assignation which concern the clause $C_i$. $C_i$ has only three variables, one in $X_1$, one in $X_2$, one in $X_3$.
			 Thus the claim follows.
			
%		\item[$\Rightarrow$] 
		Let $I: \{v_1, \hdots, v_n\} \rightarrow \{0,1\}$ be an assignment witnessing that $\phi$ is an instance of \pb{1-in-3-sat}. Define a partition of $X$ into $X_1, X_2, X_3$ as follows. Let $s\in X$.
		
		\begin{enumerate}
		    \item\label{c1}  If $ s(v) = v_i$ and $I(v_i) = 1$, we assign $s \in X_1$. If $ s(v) = v_i$ and $I(v_i) = 0$ we assign $s \in  X_2$,
		    \item\label{c22} 	For every $1 \leq i \leq m$, $C_i = l_{i_1} \vee l_{i_2} \vee l_{i_3}$. %Without loss of generality, we may assume that $I(l_{i_1}) = 1$ and $I(l_{i_2}) = I(l_{i_3}) = 0$. 
		    Let $s \in X$ be such that $ s(c) = i$. We send the unique $s$ such that $s(\ell) = l_{i_j}$ and $I(l_j)=1$ to $X_1$ and assign exactly one of the remaining two such assignments $s$ to $X_2$ and $X_3$,%for $1\le j\le 3$, we send $s$ to $X_j$,
		    \item\label{c3} If $s(a) = k$,  we send $s$ to $X_k$. 
		\end{enumerate}
		 By \eqref{c1} and \eqref{c22} it now  clearly holds that  $\mA \models_{X_1} \ell \subseteq v$. Furthermore, by \eqref{c22} and \eqref{c3} it holds that $\mA \models_{X_k} b \subseteq c$ and    $\mA \models_{X_k} a \subseteq w_k$, for  $k = 1,2,3$, respectively.
		 %$\mA \models_{X_2} a \subseteq \two$ and $\mA \models_{X_3} a \subseteq \three$.

	%	let $s\in X$. If $ s(v) = v_i$ and $I(v_i) = 1$, we set $s \in X_1$. If $ s(v) = v_i$ and $I(v_i) = 0$ we set $s \in  X_2$. %Now clearly $\mA \models_{X_1} \ell \subseteq v$.
	%	For every $1 \leq i \leq m$, $C_i = l_{i_1} \vee l_{i_2} \vee l_{i_3}$. We can suppose that $I(l_{i_1}) = 1$ and $I(l_{i_2}) = I(l_{i_3}) = 0$. Let $s \in X$ such that $ s(c) = i$. If $s(\ell) = l_{i_j}$, for $1\le j\le 3$, we send $s$ to $X_j$.  %If  $s(\ell) = v_{i_2}$, we send $s$ to $X_2$ and if $\ell \langle s \rangle = v_{i_3}$, we send $s$ to $X_3$.  
%Hence it holds that % exactly one variable of each clause is in $X_k$, 
 %$\mA \models_{X_k} b \subseteq c$, for  $k = 1,2,3$.
%	Finally, we send every $s\in X$ such that $s(a) = k$ to $X_k$ for $k= 1,2,3$. Therefore it also holds  that $\mA \models_{X_1} a \subseteq \one$, $\mA \models_{X_2} a \subseteq \two$ and $\mA \models_{X_3} a \subseteq \three$. By the definition of $X_1$ it also clearly holds that  $\mA \models_{X_1} \ell \subseteq v$. %For  $k = 1,2,3$, exactly one variable of each clause is in $X_k$, thus $\mA \models_{X_k} b \subseteq c$.
%	Finally note that every variable evaluated to $0$ is in $X_2 \cup X_3$, hence $\mA \models_{X_2 \cup X_3} \ell \subseteq v$. Analogously, it holds that $\mA \models_{X_1} \ell \subseteq v$.
%	\end{itemize}
\end{proof}

\section{Conclusion}
On this paper we have studied the  tractability/intractability frontier  of data complexity of both quantifier-free and quantified  dependence, independence, and inclusion logic formulas. Furthermore, we defined a novel translation of  inclusion logic formulas into dual-Horn propositional formulas, and used it to show that the data-complexity of inclusion logic is in PTIME. In a paper under preparation we shall consider similar questions for quantifier-free formulas containing, in addition to dependence and independence atoms, also so-called anonymity atoms.
Although our results shed light on the tractability/intractability frontiers studied in this article, many open  questions related to the data-complexity of quantifier-free independence logic formulas remain. The general goal is to find fragments where we can prove PTIME/NP-complete dichotomy results. Our results on disjunctions of independence atoms show that the data-complexity question is more complex than merely the length of the disjunction, as is the case with dependence atoms. In a different direction,  it is an open question whether Theorem 13 holds under the strict semantics. 
%It is an interesting open question whether the translation of Proposition 16 can be generalized to hold for some interesting extensions of $\FO\subseteq$ by further dependency atoms.

\section*{Acknowledgements}
%The authors would like to thank Arne Meier for a number of corrections and useful suggestions. 
The second author was supported by the Academy of Finland grant 308712. The fourth author was supported by the Faculty of Science of the University of Helsinki and the Academy of Finland grant 322795. This project has received funding from the European Research Council (ERC) under the European Union’s Horizon 2020 research and innovation programme (grant agreement No 101020762).
\bibliographystyle{abbrv}
\bibliography{biblio}

%May 25, 2021

\end{document}